\documentclass[11pt]{article}
\usepackage[latin9]{inputenc}
\usepackage{verbatim}
\usepackage{amsmath}
\usepackage{amsthm}
\usepackage{amssymb}
\usepackage{hyperref}

\usepackage{times}

\makeatletter
\theoremstyle{plain}
\newtheorem{thm}{\protect\theoremname}
\theoremstyle{definition}
\newtheorem{defn}[thm]{\protect\definitionname}
\theoremstyle{plain}
\newtheorem{lem}[thm]{\protect\lemmaname}
\ifx\proof\undefined
\newenvironment{proof}[1][\protect\proofname]{\par
	\normalfont\topsep6\p@\@plus6\p@\relax
	\trivlist
	\itemindent\parindent
	\item[\hskip\labelsep\scshape #1]\ignorespaces
}{%
	\endtrivlist\@endpefalse
}
\providecommand{\proofname}{Proof}
\fi
\theoremstyle{plain}
\newtheorem{cor}[thm]{\protect\corollaryname}

\usepackage{color}
\def\DEBUG{true}

\ifdefined\DEBUG
 \newcommand{\andy}[1]{\textcolor{blue}{#1}}
  \def\rem#1{{\marginpar{\raggedright\scriptsize #1}}}
  \newcommand{\andyr}[1]{\rem{\textcolor{blue}{$\bullet$ #1}}}
  
\else
  \newcommand{\andy}[1]{#1}
  \newcommand{\andyr}[1]{}
  
\fi

\newcommand*{\email}[1]{%
    \href{mailto:#1}{#1}\par
}

\usepackage{amsthm}

\newcommand{\jobs}{\mathcal J}
\newcommand{\machines}{\mathcal M}
\newcommand{\OPT}{\mathrm{OPT}}

\title{Minimizing Makespan with an Additive Error}

\usepackage{fullpage}

\makeatother

\providecommand{\corollaryname}{Corollary}
\providecommand{\definitionname}{Definition}
\providecommand{\lemmaname}{Lemma}
\providecommand{\theoremname}{Theorem}

\begin{document}
\global\long\def\N{\mathbb{N}}%
\global\long\def\K{\mathcal{K}}%

\title{Additive Approximation Schemes for Load Balancing Problems}
\author{%
Moritz Buchem\footnote{Maastricht University, Netherlands,
\email{m.buchem@maastrichtuniversity.nl}}
\and Lars Rohwedder\footnote{EPFL, Switzerland,
\email{lars.rohwedder@epfl.ch}, supported by the Swiss National Science Foundation project 200021-184656}
\and Tjark Vredeveld\footnote{Maastricht University, Netherlands,
\email{t.vredeveld@maastrichtuniversity.nl}}
\and Andreas Wiese\footnote{Universidad de Chile, Chile,
\email{awiese@dii.uchile.cl}}
}

\maketitle
\begin{abstract}
In this paper we introduce the concept of \emph{additive approximation schemes} and apply it to load balancing problems.
Additive approximation schemes aim to find a solution with an absolute error in the objective of at most $\epsilon h$ for some suitable parameter $h$.
In the case that the parameter $h$ provides a lower bound an additive approximation scheme implies a standard multiplicative approximation scheme and can be much stronger when $h \ll \OPT$.
On the other hand, when no PTAS exists (or is unlikely to exist), additive approximation schemes can provide a different notion for approximation.

We consider the problem of assigning jobs to identical machines with given lower and upper bounds for the loads of the machines. This setting generalizes problems like \emph{makespan minimization}, the \emph{Santa Claus} problem (on identical machines), and the \emph{envy-minimizing Santa Claus} problem.
For the last problem, in which the objective is to minimize the difference between the maximum and minimum load,
the optimal objective value may be zero and hence it is NP-hard to obtain any multiplicative approximation guarantee.
For this class of problems we present additive approximation schemes for $h = p_{\max}$, the maximum processing time of the jobs.

Our technical contribution is two-fold. First, we introduce a new relaxation based on integrally assigning slots
to machines and \emph{fractionally} assigning jobs to the slots. We refer to this relaxation as the \emph{slot-MILP}.
%
  We identify structural properties of (near-)optimal solutions of the slot-MILP, which allow us to solve it efficiently in polynomial time, assuming that there are $O(1)$ different lower and upper bounds on the machine loads (which is the relevant setting for the three problems mentioned above).
  The second technical contribution is a local-search based algorithm which rounds a solution to the slot-MILP
  introducing an additive error on the target load intervals of at most $\epsilon\cdot p_{\max}$.

\end{abstract}
\thispagestyle{empty}

\newpage{}

\setcounter{page}{1}

\section{Introduction}


In traditional analysis of approximation algorithms, one tries to
find a (multiplicative) guarantee $\rho$ such that the algorithm
finds a solution of value of at most (or at least, in case
of maximization problems) $\rho\cdot\OPT$, where $\OPT$ is the optimal
solution value. An approximation scheme is a family of 
approximation algorithms with performance guarantee of $\rho=(1+\epsilon)$
(or $(1-\epsilon)$ for maximization problems) for any $\epsilon>0$.
In this paper, we introduce the concept of \emph{additive approximation
schemes}. The goal is to design a family of algorithms that find a
solution with value not more than $\epsilon\cdot h$ away from the
optimal solution value, where $h$ is chosen to be a suitable parameter
of the problem instance. Formally, we define an additive approximation
scheme as follows. 
\begin{defn}
\label{def:additiveapprox} An \emph{additive approximation scheme}
is a family of algorithms that finds on any instance $I$ and for
every $\epsilon>0$ a solution with value $A(I)$ satisfying 
\[
\left|A(I)-\OPT(I)\right|\leq\epsilon h,
\]
where $h$ is a suitable chosen parameter of instance $I$. 
\end{defn}

In general, we are interested in finding additive approximation schemes that run in polynomial time, i.e., of the form $|I|^{f(1/\epsilon)}$, where $|I|$ denotes the size of the input and $f$ is some computable function.
Additive approximation schemes are particularly interesting in the
following two scenarios. 
\begin{enumerate}
\item When the problem at hand admits a PTAS and $h \ll \OPT$, one obtains a stronger guarantee than the PTAS.
\item When there cannot exist a PTAS, or even any multiplicative guarantee
for the problem, additive approximation schemes give an alternative
notion for approximating the problem. A notable example is the case when it is NP-complete to decide whether $\OPT = 0$, as then no multiplicative approximation guarantee can be obtained.
\end{enumerate}

Additive approximation has received only little attention in the literature.
Notable exceptions include Vizing's algorithm that finds an edge coloring
with at most $\Delta+1$ colors, where $\Delta$ is the maximum degree
of a graph~\cite{Vizing1964}. As $\Delta$ is a lower bound on the
minimum number of colors needed, this results implies an additive
$1$-approximation. Also, Alon et al.~\cite{AlonEtal2009} present
an additive $\epsilon n^{2}$ approximation algorithm for the edge
deletion problem to obtain a graph with a monotone property. This
falls into an additive approximation scheme for parameter $h=n^{2}$,
which is in fact an upper bound and not an lower bound on the minimum
number of edges to be deleted.

In this paper, we apply the concept of additive approximation schemes
to scheduling and load balancing problems which are
among the classical problems in the literature on approximation algorithms,
starting with the seminal work of Graham~\cite{Graham1966}. In these
problems $n$ jobs need to be processed by one of $m$ machines. A
job has processing time $p_{j}$ and the load of machine $i$ is the
sum of the processing times of the jobs assigned to $i$.
The goal is to find a schedule, which can be represented by an assignment
of jobs to the machines, that optimizes an objective function over
the machine loads. Since it is strongly NP-hard to decide whether
there is a schedule that assigns the same load to each machine (see~\cite{GareyJohnson1978}),
most non-trivial load balancing problems of this form are also strongly
NP-hard. This observation has led to extensive research on approximation
algorithms. In the following, we consider three variations of load
balancing problems.

\paragraph{Variations.}

The first objective function is to minimize the maximum machine load,
i.e., to minimize the \emph{makespan}. This is the one of the most
classical scheduling problems on parallel machines and has led to
the first approximation algorithms~\cite{Graham1966,Graham1969}.
Sahni~\cite{Sahni1976} showed that the problem admits an FPTAS for
constant number of machines and Hochbaum and Shmoys~\cite{HochbaumShmoys1987}
found a PTAS if the number of machines is part of the input.
Since then, there has been lively research in improving the running
time, e.g., to an EPTAS~\cite{Jansen2010,ChenEtal2013,JansenEtal2020}.

The second objective function that we consider is the \emph{Santa
Claus} problem, also known as \emph{max-min allocation}~\cite{Chakrabarty2008}.
Here, the goal is to maximize the minimum load, i.e., to make the
least loaded machine as full as possible. Bansal and Sviridenko~\cite{BansalSviridenko2006}
coined the term Santa Claus problem when they studied it in
the restricted assignment setting. This objective is considered
to measure the fairness of the allocation. The case of identical
machines was also consider by Woeginger~\cite{Woeginger1997} who
presents a PTAS.

As a third and final objective function, we consider to minimize the
\emph{maximum envy}, which is defined as the maximum load minus the
minimum load. This objective has been considered by Lipton et.~al~\cite{LiptonEtal2004}.
While in the Santa Claus problem fairness is measured by
the minimum load of a machine, in this setting fairness is considered
by the difference between the maximum and minimum load. 
Note that it is strongly NP-hard to decide whether or not the envy
is $0$. Therefore, unless $\text{P}=\text{NP}$, there cannot exist
any polynomial time approximation algorithm with any (multiplicative)
performance guarantee.

It is notable that for all three variants a simple greedy algorithm,
which assigns the jobs iteratively to the least loaded machine, gives
a additive error of $p_{\max}$. This guarantee is incomparable to
the error of $\epsilon\OPT$ of a PTAS.

\paragraph{Our contribution.}

In this paper, we will present additive approximation schemes for
load balancing problems on identical machines with parameter
$h=p_{\max}$. For the makespan and Santa Claus objective this gives
a significant improvement over the greedy algorithm mentioned above
while also dominating the guarantees of the known PTASs;
for minimizing the maximum envy this demonstrates how additive
approximation schemes can lead to non-trivial guarantees
when no multiplicative guarantees are possible.

For the mentioned load balancing problems this new perspective on
the analysis of approximation guarantees is particularly interesting,
because it requires fundamentally new methods: Most standard (multiplicative)
PTASes first round the processing times, find optimal solutions to
the rounded instance and then transform this into a solution for the
original instance. These methods do not work for additive approximation
scheme, because the rounding of all jobs directly adds an error of
$\epsilon\sum_{j}p_{j}$, which is too large compared to $\epsilon\cdot p_{\max}$.
Therefore, there is need for new (non-trivial) machinery. We present
a new relaxation for this general class of load balancing problems,
which we call the \emph{slot-MILP}. This slot-MILP can be
interpreted as a strenghened variant of the assignment-LP. The assignment-LP
is the relaxation that allows jobs to be assigned to the machines
fractionally. In the slot-MILP we first group the jobs of similar
size, but we do not round them (unlike previous PTASes). However,
in addition to the constraints of the assignment-LP we require an
integral number of jobs of each group to be assigned to each machine.
This property can be implemented using integer variables.
Alternatively, the relaxation can also be thought of assigning slots
(for the groups of jobs) \emph{integrally} to machines and then the jobs
\emph{fractionally} to the slots.
A straight-foward
application of Lenstra's algorithm fails, since the number of integer
variables is linear in the input size. Instead, we manage to solve
it using non-trivial structural properties combined with dynamic programming.
While the additive integrality gap of the assignment-LP can
be as large as $p_{\max}$, this gap is only $\epsilon\cdot p_{\max}$
for the slot-MILP. We show this using a rounding procedure inspired
by a local search method for the restricted assignment problem~\cite{DBLP:journals/siamcomp/Svensson12,DBLP:conf/soda/JansenR17,DBLP:conf/ipco/JansenR17}.
The local search algorithm repeatedly moves jobs between machines,
eventually converging to a good solution. Although in the restricted
assignment problem no polynomial running time bound is known for the
local search procedure, in our case we obtain such a bound
for our local search.

Our results extend to a more general setting in which for each
machine we are given a target interval for its load, with at most
$O(1)$ different such intervals across all machines. Our solution
then violates the desired load on each machine by at most $\pm\epsilon\cdot p_{\max}$
(or we assert that no solution exists for the given loads).

\paragraph*{Other related work.}
The case of small values of $p_{\max}$ has also been considered from
a parameterized point of view: If all processing times are integers,
then it is possible to obtain a running time that is fixed-parameter
tractable (FPT) in parameter $k = p_{\max}$~\cite{DBLP:journals/mp/MnichW15}.
In other words, there is an algorithm that finds
an exact solution in time $f(k) \cdot |I|^{O(1)}$
for some computable function $f$.

Other variants of load balancing problems have been considered in the paper by Alon et.~al~\cite{AlonEtal1998}. They identify some conditions on the objective function so that the results load balancing or machine schedule problem admits a PTAS.
Bansal and Sviridenko~\cite{BansalSviridenko2006} consider the Santa Claus problem in a more general setting where the processing time of a job $j$ is also dependent on the machine $i$ on which is processed, denoted by $p_{ij}$. They considered a restricted assignment setting in which each job is only allowed to be processed on a subset of the machines, but then the processing time is identical over all these machines, i.e., $p_{ij} \in \{p_j, 0\}$.
Bansal and Sviridenko~\cite{BansalSviridenko2006} presented an approximation algorithm with a performance guarantee dependent on the number of machines.
Feige~\cite{Feige2008} showed that the integrality gap of the configuration LP for this setting is constant using the Lovasz local lemma. Through the work of Moser and Tardos~\cite{MoserTardos2010} and Haeupler et al.~\cite{HaeuplerEtal2011} this could be turned into a
polynomial time algorithm. Asadpour et al.~\cite{Asadpour2012} present a local search method with a performance guarantee of $4$; however, it is unkown whether this method runs in polynomial time.

Related to additive approximation algorithms are several papers on the bin packing problem. Jansen et al.~\cite{JansenEtal2013} present an additive $1$-approximation algorithm in time exponential in the optimal number of bins plus a polynomial in the number of items to be packed. Hence, this algorithms is only useful when the optimal number of bins is small. On the other hand, Karmakar and Karp~\cite{KarmakarKarp1982} gave an algorithm that runs in polynomial time in the number of items giving a solution with at most $(1+\epsilon)\OPT + O(1/\epsilon^2)$ bins. Such a bound is called Asymptotic PTAS (APTAS) as the additive term vanishes when $\OPT$ is sufficiently large. This was subsequently improved by \cite{PlotkinEtal1995} and \cite{FernandezDeLaVegaLueker1981}.
Ophelders et al.~\cite{OpheldersEtal2020} showed that a simple local search algorithm for the so-called Equitable Hamiltonian Cycle finds a solution that is at most $1$ away from the optimal solution value.

\section{A new relaxation}
\label{sec:relax}
We introduce a new alternative relaxation for a general class of load balancing problems in machine scheduling. We first formally define this class of load balancing problems as the \emph{target load balancing problem}. 

\begin{defn}
In the \emph{target load balancing problem} we are given a set of
jobs $\jobs$ with a processing time $p_{j}$ for each $j\in\jobs$
and a set of machines $\machines$ with values $\ell_{i},u_{i}$ for
each machine $i\in\machines$. The goal is to assign each job $j\in\jobs$
to a machine $i\in\machines$ such that for each machine $i\in\machines$
the load of $i$ (i.e., the sum of the processing times of the jobs
assigned to $i$) is in the interval $[\ell_{i},u_{i}]$.
\end{defn}
This generalizes the load balancing settings mentioned earlier. For example, in $P||C_{max}$ every machine has a target load interval with $\ell_i = 0$ and $u_i = T$, where $T$ is a guess on the optimal makespan. 
For a given instance of the problem, we define $K$ to be the number
of different target load intervals $[\ell_{i},u_{i}]$ of
the machines in $\machines$, i.e., $K=\left|\left\{ [\ell_{i},u_{i}]|i\in\machines\right\} \right|$.
We will assume that $K=O(1)$.

Let $\epsilon>0$ and assume w.l.o.g.~that $1/\epsilon\in\N$. Our task is to
either assert that there is no solution
for the given instance or to find a solution in which the load of each
machine $i$ is in the interval $[\ell_{i}-\epsilon\cdot p_{\max},u_{i}+\epsilon\cdot p_{\max}]$
with $p_{\max}:=\max_{j\in J}p_{j}$, i.e.,
violating the target
load range of each machine by at most $\epsilon\cdot p_{\max}$. First
we partition the jobs into sets $\jobs_{1},\dotsc,\jobs_{1/\epsilon}$,
where for $k=1,...,1/\epsilon$ the set $\jobs_{k}$ contains
all jobs $j\in\jobs$ with $p_{j}\in((k-1)\epsilon\cdot p_{\max},k\epsilon\cdot p_{\max}]$.
We define a new relaxation for this problem in which for each machine
$i$ and each $k=1,...,1/\epsilon$ we specify integrally how
many jobs from $\jobs_{k}$ are assigned to $i$ (one may imagine
that this defines slots for jobs from $\jobs_{k}$ on $i$). Then
the jobs from $\jobs_{k}$ are assigned fractionally to these slots.
We denote by the slot-MILP the following relaxation.
\begin{align}
\min & \ 0\nonumber \\
\sum_{i\in\machines}x_{i,j} & =1 &  & \forall j\in\jobs\nonumber \\
\sum_{j\in\jobs}p_{j}x_{i,j} & \ge\ell_{i} &  & \forall i\in\machines\label{eq:lower-bound-load}\\
\sum_{j\in\jobs}p_{j}x_{i,j} & \le u_{i} &  & \forall i\in\machines\label{eq:upper-bound-load}\\
\sum_{j\in\jobs_{k}}x_{i,j} & =y_{i,k} &  & \forall i\in\machines,\forall k\in\{1,\dotsc,1/\epsilon\}\nonumber \\
x_{i,j} & \ge0 &  & \forall j\in\jobs,i\in\machines\nonumber \\
y_{i,k} & \in\N_{0}\text{ } &  & \forall i\in\machines,k\in\{1,\dotsc,1/\epsilon\}\nonumber 
\end{align}
In the slot-MILP the integer variables define exactly how many jobs of a type are assigned to a machine but do not imply a specific load based on rounded processing times. The load of a machine is based on an assignment that satisfies the distribution of slots among the machines.

Since the slot-MILP contains $1/\epsilon \cdot|\machines|$ integral variables, it
is not clear how to solve it in polynomial time. Nevertheless, we present two methods of efficiently solving the slot-MILP given that $K=O(1)$. The first method gives an exact solution while the second method gives a solution that slightly violates the target load intervals. Afterwards, we show how to round a fractional solution of the slot-MILP to an integral solution, while violating the load interval $[\ell_{i},u_{i}]$ for each machine $i\in\machines$ by at most $\epsilon\cdot p_{\max}$. In the following we give sketches of the proofs of the structural properties used to develop our solution methods. For detailed proofs we refer to Appendix~\ref{apx:relax_details}.

\subsection{Exact solution method for the relaxation}
\label{subsec:solving-relaxation}
We make use of a structural property to find an exact solution to the slot-MILP. Note that in this case an exact solution is one that satisfies~\eqref{eq:lower-bound-load} and~\eqref{eq:upper-bound-load}. This structure allows us to guess the values of the integral variables in polynomial time and then the remaining problem is only a linear program.

Given a solution $(x,y)$, for each machine $i\in\machines$ let $y_{i}$
denote the $(1/\epsilon)$-tuple $(y_{i,1},\ \dotsc,\ y_{i,1/\epsilon})$.
We show that there are solutions in which there are not too
many different vectors $y_{i}$.
This uses similar arguments to~\cite{DBLP:journals/orl/EisenbrandS06}.

\begin{lem}\label{lem:structural-property} There is a solution
$(x,y)$ to the slot-MILP such that for all $i,i'\in\machines$ with $[\ell_{i},u_{i}]=[\ell_{i'},u_{i'}]$
and $y_{i}\equiv y_{i'}\bmod2$ it follows that $y_{i}=y_{i'}$. 
\end{lem}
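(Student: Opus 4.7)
The plan is an iterative averaging argument: starting from any feasible solution $(x,y)$ to the slot-MILP, I would repeatedly apply a merging operation that equalizes the $y$-vectors of a violating pair of machines, and argue that the process must terminate at a solution satisfying the claim. If the slot-MILP is infeasible the lemma is vacuous, so I assume some feasible $(x,y)$ is at hand.

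The merging operation works as follows. Suppose there exist $i,i'\in\machines$ with $[\ell_i,u_i]=[\ell_{i'},u_{i'}]$, $y_i\equiv y_{i'}\pmod 2$, and $y_i\neq y_{i'}$. Since $y_{i,k}+y_{i',k}$ is even for every $k$, the vector $\hat y := (y_i+y_{i'})/2$ lies in $\N_0^{1/\epsilon}$. I define a new solution by leaving $x_{i'',j}$ unchanged for $i''\notin\{i,i'\}$ and setting $x_{i,j}' = x_{i',j}' = (x_{i,j}+x_{i',j})/2$ for every $j\in\jobs$. Nonnegativity and the assignment equalities $\sum_{i''}x_{i'',j}=1$ are preserved; the new load on both $i$ and $i'$ equals the midpoint $\tfrac12(L_i+L_{i'})$, which lies in the common interval $[\ell_i,u_i]$ since both $L_i$ and $L_{i'}$ do; and $\sum_{j\in\jobs_k}x_{i,j}' = \sum_{j\in\jobs_k}x_{i',j}' = \hat y_k \in\N_0$. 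Hence $(x',y')$ is feasible for the slot-MILP with $y_i'=y_{i'}'=\hat y$.

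For termination I would use the potential $\Phi(y)=\sum_{i\in\machines}\|y_i\|_2^2$, which is a nonnegative integer throughout the procedure. The identity
\[
\|y_i\|_2^2 + \|y_{i'}\|_2^2 - 2\|\hat y\|_2^2 \;=\; \tfrac12\,\|y_i-y_{i'}\|_2^2
\]
shows that $\Phi$ strictly decreases with each merge; in fact, because $y_i-y_{i'}$ has only even entries, the decrease is at least $2$ whenever $y_i\neq y_{i'}$. Thus only finitely many merges are possible, and in the terminal solution every pair $(i,i')$ with $[\ell_i,u_i]=[\ell_{i'},u_{i'}]$ and $y_i\equiv y_{i'}\pmod 2$ must have $y_i=y_{i'}$, as required.

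The main subtlety lies in picking the right potential. The obvious candidate, namely the number of distinct $y$-vectors per interval type, need not strictly decrease because $\hat y$ may happen to coincide with the $y$-vector of some third machine, leaving the count unchanged. The squared $\ell_2$-norm circumvents this by strict convexity: averaging two distinct integer vectors always strictly reduces the sum of their squared norms, regardless of what the remaining machines look like. This is essentially the Eisenbrand--Shmonin trick of merging vectors of equal parity that the authors cite when stating the lemma.
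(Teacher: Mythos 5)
Your proof is correct and follows essentially the same route as the paper: the identical averaging operation on a violating pair $(i,i')$, the same feasibility checks (assignment constraints, midpoint load in the common interval, integrality of $\hat y$ from the parity condition), and a strictly decreasing norm-based potential. The only cosmetic difference is that you iterate merges using the integer-valued squared $\ell_2$ potential to force termination, while the paper takes a solution minimizing the $\ell_2$-norm potential and derives a contradiction via the (strict) triangle inequality.
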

\begin{proof}[Proof sketch]
Let $(x,y)$ be a optimal solution to the slot-MILP and consider two machines $i_1,i_2$ with $[\ell_{i_1},u_{i_1}]=[\ell_{i_2},u_{i_2}]$. Suppose that $y_{i_1}\equiv y_{i_2}\bmod2$ but $y_{i_1}\neq y_{i_2}$. We construct a new solution $(x',y')$ which 
changes the jobs on machines $i_1$ and $i_2$ but
leaves the jobs of all other machines untouched. 
Intuitively, we assign to $i_1$ and $i_2$ the average load of both machines.
We define $x'_{i_{1},j}=x'_{i_{2},j}=(x_{i_{1},j}+x_{i_{2},j})/2$ for each $j\in \jobs$ and $y'_{i_1,k}=y'_{i_2,k}= (y_{i_1,k}+y_{i_2,k})/2$ for each $k\in \{1,...,1/\epsilon\}$.
Also, we set $x'_{i,j}=x_{i,j}$ and $y'_{i,k}=y_{i,k}$ for all $i \in \machines \setminus \{i_{1},i_{2}\}$, all $j\in \jobs$, and all $k\in \{1,...,1/\epsilon\}$.
As $y_{i}\equiv y_{i'}\bmod2$ and $(x,y)$ is feasible we have that $(x',y')$ is feasible as each job remains fully assigned and no machine is assigned more jobs of a type than it has slots. Furthermore, the load only changes on machines $i_1$ and $i_2$. However, as the new load on these machines becomes the average of the previous loads we have that all machine loads satisfy their respective target loads.
\end{proof}
For details of the proof we refer to Appendix~\ref{apx:pf_strucprop}.
Using Lemma~\ref{lem:structural-property} we can solve the slot-MILP in polynomial
time if $K=O(1)$.
\begin{lem}
\label{lem:solve-ILP}We can solve the slot-MILP in time $m^{O\left(K\cdot2^{1/\epsilon}\right)}\cdot n^{O\left(K/\epsilon\cdot2^{1/\epsilon}\right)}.$
\end{lem}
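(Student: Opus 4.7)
The plan is to exploit Lemma~\ref{lem:structural-property} to shrink the enumeration space for the integer variables $y$ down to a polynomial number of possibilities whenever $K=O(1)$, after which only a linear program in the $x_{i,j}$ variables needs to be solved. By that lemma, it suffices to search for solutions in which, within each of the $K$ target-interval classes, the vectors $y_{i}$ are constant on each parity class modulo~$2$. Since a vector in $\N_0^{1/\epsilon}$ has one of at most $2^{1/\epsilon}$ possible parity signatures, in such a solution at most $K\cdot 2^{1/\epsilon}$ distinct vectors $y_{i}$ appear across all machines.

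Guided by this, I would enumerate: (a) for each of the $K\cdot 2^{1/\epsilon}$ pairs (target interval, parity signature), a candidate vector $y\in\{0,\dotsc,n\}^{1/\epsilon}$, noting that $y_{i,k}\le |\jobs_{k}|\le n$; and (b) for each of the $K$ target intervals, how many of its machines are assigned to each of its $2^{1/\epsilon}$ parity classes. The first task contributes at most $(n+1)^{K\cdot 2^{1/\epsilon}/\epsilon}$ guesses and the second contributes at most $(m+1)^{K\cdot 2^{1/\epsilon}}$ guesses, which already matches the exponent in the claimed time bound.

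Given a guess, I lift it to a full assignment of an integer vector $y_{i}$ to every individual machine by distributing the machines of each target interval among its parity classes according to the guessed counts; since two machines with the same target interval are fully symmetric, the specific distribution does not affect feasibility. With all $y_{i,k}$ now fixed, what remains of the slot-MILP is a linear program over the fractional variables $x_{i,j}$ consisting of the assignment constraints, the slot equalities $\sum_{j\in\jobs_{k}}x_{i,j}=y_{i,k}$, the two-sided load constraints~\eqref{eq:lower-bound-load} and~\eqref{eq:upper-bound-load}, and nonnegativity. This LP is solvable in $\mathrm{poly}(n,m)$ time, which is absorbed into the exponent of the stated bound.

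The algorithm returns any guess for which the LP is feasible; infeasibility of all guesses certifies that the slot-MILP has no solution at all, since Lemma~\ref{lem:structural-property} guarantees a solution of this structural form whenever the slot-MILP is feasible. The main delicate point is keeping the counting tight: naively enumerating $y_{i}$ per machine would cost $n^{\Omega(m/\epsilon)}$, and it is precisely the restriction to $K\cdot 2^{1/\epsilon}$ classes granted by Lemma~\ref{lem:structural-property} that cuts the search space down to the $m^{O(K\cdot 2^{1/\epsilon})}\cdot n^{O(K/\epsilon\cdot 2^{1/\epsilon})}$ running time.
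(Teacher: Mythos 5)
Your proposal is correct and follows essentially the same route as the paper: invoke Lemma~\ref{lem:structural-property} to bound the number of distinct (target interval, $y_i$-vector) machine types by $K\cdot 2^{1/\epsilon}$, enumerate the $y$-vector and machine count for each type (exploiting machine symmetry), and solve the residual LP in the $x_{i,j}$ variables, declaring infeasibility only if every guess fails. The counting of guesses, $m^{O(K\cdot 2^{1/\epsilon})}\cdot n^{O(K/\epsilon\cdot 2^{1/\epsilon})}$, matches the paper's bound.
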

\begin{proof}
We first guess all values of $y_{i,k}$ (up to permutations of machines)
of the optimal solution due to Lemma~\ref{lem:structural-property} as
follows. We say that two machines $i,i'\in\machines$ are of the same
\emph{type }in a solution if $y_{i}=y_{i'}$ and $[\ell_{i},u_{i}]=[\ell_{i'},u_{i'}]$.
Then Lemma~\ref{lem:structural-property} implies that there are only
$K\cdot2^{1/\epsilon}$ different machine types. For each of these
$K\cdot2^{1/\epsilon}$ types we guess (1) the number of machines
having this type and (2) for each $k\in\{1,...,1/\epsilon\}$ we guess
the value of $y_{i,k}$ for each machine $i\in\machines$ of this
type. Note that the machines are identical and hence it suffices to
guess the number of machines of each type, rather than guessing which
exact machine is of which type. The total number of guesses is bounded
by $m^{O\left(K\cdot2^{1/\epsilon}\right)}\cdot n^{O\left(K/\epsilon\cdot2^{1/\epsilon}\right)}.$
Then the remaining problem is only a linear program (LP) since all
integral variables of the slot-MILP are already fixed. If our guess was correct
then the LP must have a feasible solution.
\end{proof}

\subsection{Faster (approximate) solution to the relaxation}
\label{sec:dp}
The solution based on Lemma~\ref{lem:structural-property} can be found in double exponential time with respect to the number of job types $1/\epsilon$ and is an exact solution to the slot-MILP. In the following we show that using a different (slightly more complicated) structural property one can find an additive $\delta$-approximate solution to the slot-MILP in single exponential time with respect to $1/\epsilon$ and polynomial in $1/\delta$, i.e., even with $\delta:=1/n^{O(1)}$ we obtain polynomial running time. Here, $\delta$-approximate implies that we find a solution to a weaker version of slot-MILP with $\ell'_i = \ell_i - \delta \cdot p_{\max}$ and $u'_i = u_i + \delta \cdot p_{\max}$ for every $i$. We refer to this weaker version as slot-MILP'.
The algorithm is based on a different structural property than the
one proved in Lemma~\ref{lem:structural-property}. Given a solution
$(x,y)$ of the slot-MILP, for each machine $i\in\machines$ and each $k\in\{1,\dotsc,1/\epsilon\}$,
we denote by $z_{i,k}$ the average size of the jobs type $k$ on machine $i$ defined by
\[
z_{i,k} \cdot y_{i,k}=\sum_{j\in\jobs_{k}}p_{j}x_{i,j}.
\]
In the case that $y_{i,k} = 0$ this allows us to freely choose the value of $z_{i,k}$ which is important for the structural property in the following lemma.
We prove that there is always a solution to the slot-MILP and an
ordering of the machines such that for each $k\in\{1,\dotsc,1/\epsilon\}$
the values $z_{i,k}$ are non-decreasing and on each prefix of length
$\ell$ of the machines the total size of the slots for the jobs in
$\jobs_{k}$ is at least as large as the $y_{\sigma(1),k}+\cdots+y_{\sigma(\ell),k}$
smallest jobs in $\jobs_{k}$.  For each integer $n'$ let $\jobs_{k}^{\min}(n')\subseteq\jobs_{k}$
be the $n'$ smallest jobs in $\jobs_{k}$.
\begin{lem}
\label{lem:structural-property-2} There is an optimal solution $(x,y)$
for the slot-MILP, a corresponding vector $\left\{ z_{i,k}\right\} _{i\in\machines,k\in1,\dotsc,1/\epsilon\}}$,
and an ordering $\sigma:\{1,...,|\machines|\}\rightarrow\machines$
such that 
\begin{align}
\sum_{\ell'=1}^{\ell}y_{\sigma(\ell'),k}z_{\sigma(\ell'),k} & \ge\sum_{j\in\jobs_{k}^{\min}(y_{\sigma(1),k}+\cdots+y_{\sigma(\ell),k})}p_{j} &  & \forall k\in\{1,\dotsc,1/\epsilon\}\,\forall\ell\in\{1,...,|\machines|\}\label{eq:min-jobs-bound}\\
\sum_{i\in\machines}y_{i,k}z_{i,k} & =\sum_{j\in\jobs_{k}}p_{j} &  & \forall k\in\{1,\dotsc,1/\epsilon\}\label{eq:volume-assigned}\\
z_{\sigma(\ell),k} & \le z_{\sigma(\ell+1),k} &  & \forall k\in\{1,\dotsc,1/\epsilon\}\,\forall\ell\in\{1,...,|\machines|-1\}.\label{eq:monotonicity}
\end{align}
\end{lem}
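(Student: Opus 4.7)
The plan is to start from any optimal solution $(x,y)$ of the slot-MILP, freeze the integer part $y$, and then search within the polytope of fractional assignments $x^{\ast}$ compatible with this $y$ for one that, together with a suitable ordering $\sigma$ of the machines, satisfies the three properties simultaneously. Property~\eqref{eq:volume-assigned} is automatic from the assignment constraints, so the real work lies in arranging~\eqref{eq:min-jobs-bound} and~\eqref{eq:monotonicity} for a \emph{single} ordering $\sigma$ that must serve \emph{all} job types $k$ at once. This coupling across types is what makes the statement non-trivial: for any fixed $k$ alone one could simply sort.

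The guide for the construction is the following \emph{sorted filling}. For any candidate ordering $\sigma$ and any type $k$, sort the jobs of $\jobs_{k}$ in increasing order of processing time and assign them to the slots in the order $\sigma(1),\sigma(2),\dotsc$, splitting a job fractionally between two consecutive machines whenever a slot boundary cuts through it. The resulting averages $z_{i,k}$ make \eqref{eq:min-jobs-bound} tight on every prefix and \eqref{eq:monotonicity} immediate, because every job assigned to $\sigma(\ell)$ is no larger than every job assigned to $\sigma(\ell+1)$; for the machines with $y_{i,k}=0$ the value $z_{i,k}$ is free and can be chosen to fit into the monotone chain. The only thing that can fail for the sorted filling is machine-load feasibility~\eqref{eq:lower-bound-load}\,--\,\eqref{eq:upper-bound-load}.

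To repair feasibility I would use an exchange argument in the transportation polytope of fractional $x$'s compatible with the fixed $y$. The difference between the sorted filling and the original feasible $x$ decomposes into a collection of cycles along which fractional mass can be rerouted; by routing mass only between pairs of machines that are adjacent in $\sigma$ and only within a single type $k$ at a time, each exchange swaps a small job for a larger one on the earlier machine (and vice versa on the later), which preserves \eqref{eq:monotonicity} while bringing the loads back within $[\ell_{i},u_{i}]$. Conditions \eqref{eq:min-jobs-bound} and \eqref{eq:volume-assigned} are preserved because each local exchange keeps the total type-$k$ volume on every prefix weakly above the sum of the smallest jobs.

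The main obstacle is producing a single ordering $\sigma$ that the exchange argument can use for every type $k$ simultaneously. I would resolve this by exploiting $K=O(1)$ together with Lemma~\ref{lem:structural-property}: the machines partition into $O(K\cdot 2^{1/\epsilon})$ classes of identical load intervals and identical $y_{i}$-vectors, and within each class the machines are fully exchangeable. Choosing $\sigma$ to respect this partition and breaking ties inside each class consistently across types (e.g.\ by a lexicographic rule on the loads) lets the exchange step be applied type by type without ever undoing the progress made for an earlier type; the remaining bookkeeping is deferred to Appendix~\ref{apx:relax_details}.
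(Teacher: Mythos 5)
There is a genuine gap, and it sits exactly where the lemma's whole difficulty lies. First, a small misreading: \eqref{eq:min-jobs-bound} needs no arranging at all --- for \emph{any} feasible $(x,y)$ and \emph{any} ordering, the prefix machines carry $y_{\sigma(1),k}+\cdots+y_{\sigma(\ell),k}$ units of fractional type-$k$ mass with each job contributing at most $1$, so the prefix volume is at least the sum of that many smallest jobs; likewise \eqref{eq:volume-assigned} is automatic. The entire content of the lemma is \eqref{eq:monotonicity}: the existence of a \emph{feasible} solution whose vectors $z_i=(z_{i,1},\dotsc,z_{i,1/\epsilon})$ are pairwise coordinatewise comparable, i.e.\ form a chain. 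Your sorted filling produces a chain but may violate \eqref{eq:lower-bound-load}--\eqref{eq:upper-bound-load}, and your repair step does not show that feasibility can be restored while the chain survives. Rerouting mass within a single type $k$ between machines adjacent in $\sigma$ necessarily changes both machines' loads, and moving the larger job to the earlier machine pushes $z_{\sigma(\ell),k}$ up and $z_{\sigma(\ell+1),k}$ down --- i.e.\ directly toward violating \eqref{eq:monotonicity}; nothing in your argument guarantees that the loads re-enter $[\ell_i,u_i]$ before the order flips, nor that the cycle decomposition can be restricted to adjacent pairs at all. The paper's proof avoids this by a different local move: it trades type-$k$ mass against type-$\bar k$ mass between two machines in compensating amounts ($\overline{\delta}=\delta(p_j-p_{j'})/(p_{\overline{j}'}-p_{\overline{j}})$), so each machine's \emph{total load is unchanged} and feasibility is never at stake, and a potential function over the sorted $z$-values certifies that a maximizer has no incomparable pair, which yields the ordering. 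That cross-type, load-preserving exchange is the missing idea in your proposal.

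Your fallback via Lemma~\ref{lem:structural-property} does not close the gap either. That lemma groups machines with identical target intervals and identical $y_i$ vectors, but machines in different classes (and even in the same class, before any averaging) can have incomparable $z$-vectors, and choosing $\sigma$ to ``respect the partition'' with a lexicographic tie-break gives no reason why $z_{\sigma(\ell),k}\le z_{\sigma(\ell+1),k}$ should hold for all $k$ simultaneously across classes. Note also that the comparability argument is needed for arbitrary $K$ (the lemma and the DP built on it never assume $K=O(1)$ for this structural step), so an argument leaning on the $O(K\cdot 2^{1/\epsilon})$ class count is aiming at the wrong invariant. To fix your writeup you would need to replace the single-type adjacent swaps by an exchange that preserves each machine's load exactly (hence must mix two types), plus a termination or extremal argument in the style of the paper's potential function.
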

\begin{proof}[Proof sketch]
Conditions~\eqref{eq:min-jobs-bound} and~\eqref{eq:volume-assigned} follow from feasibility.
Condition~\eqref{eq:monotonicity} can be established by a potential function argument: we show that a solution minimizing this potential function has to fulfill condition~\eqref{eq:monotonicity} as otherwise we can swap some of the jobs of the same type between two machines and decrease the potential function while not decreasing the total load on the machines.
\end{proof}
We introduce a dynamic program that uses the property from Lemma~\ref{lem:structural-property-2}. Intuitively, our DP guesses the machines in the ordering $\sigma$
one after the other. When it guesses the next machine $i$, it first guesses the type of the machine, i.e. the values of $\ell_i$ and $u_i$, and then it guesses
for each $k\in\{1,\dotsc,1/\epsilon\}$ the value $z_{i,k}$ and the
number of jobs $y_{i,k}$ from $\jobs_{k}$ on machine $i$. In order to bound the running time we need to consider rounded values of $z_{i,k}$. Therefore, the DP ensures that the conditions~\eqref{eq:min-jobs-bound} and~\eqref{eq:monotonicity} on the vectors $y,z$ from
Lemma~\ref{lem:structural-property-2} are satisfied and that condition~\eqref{eq:volume-assigned} as well as the upper
and lower bounds on the load of each machine $i$ are only violated by a small extent. The following lemma shows that this is sufficient in order to compute
an approximate solution to the slot-MILP based on the vectors $y,z$.
\begin{lem}\label{lem:sufficient-conditions}
Suppose that we are given an ordering
$\sigma:\{1,...,|\machines|\}\rightarrow\machines$ and vectors $\left\{ y_{i,k},z_{i,k}\right\} _{i\in\machines,k\in\{1,\dotsc,1/\epsilon\}}$
such that conditions~\eqref{eq:min-jobs-bound} and~\eqref{eq:monotonicity} hold.  Moreover, assume that for each
$i\in\machines$ it holds that
\begin{equation}
\ell_{i}\le\sum_{k=1}^{1/\epsilon}y_{i,k}z_{i,k}\le u_{i} + \delta p_{\max} \label{eq:bounds}
\end{equation}
and for each $k \in \{1,\cdots,1/\epsilon\}$ we have that condition~\eqref{eq:volume-assigned} is slightly violated as follows
\begin{equation}
\sum_{j \in \jobs_k}p_j \leq \sum_{i \in \machines} y_{i,k}z_{i,k} \leq \sum_{j \in \jobs_k}p_j + \delta \epsilon\cdot p_{\max}. \label{eq:total_volume_weak}
\end{equation}
Then we can compute a vector $\left\{ x_{i,j}\right\} _{i\in\machines,j\in\jobs}$
such that $(x,y)$ is a solution to slot-MILP' in time $O(mn^2)$.
\end{lem}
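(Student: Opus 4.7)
The plan is to build $x$ one job class at a time: for each $k\in\{1,\dots,1/\epsilon\}$ I will fractionally assign $\jobs_{k}$ to the machines so that $\sum_{i}x_{i,j}=1$, $\sum_{j\in\jobs_{k}}x_{i,j}=y_{i,k}$, and the type-$k$ load on machine $i$ is within a small additive error of the planned value $y_{i,k}z_{i,k}$. Summed across the $1/\epsilon$ types, the total load $\sum_{j}p_{j}x_{i,j}$ will equal $\sum_{k}y_{i,k}z_{i,k}$ plus accumulated error; combined with~\eqref{eq:bounds} this lands it in $[\ell_{i}-\delta\cdot p_{\max},\,u_{i}+\delta\cdot p_{\max}]$ as slot-MILP' requires.

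For a fixed $k$, I would sort $\jobs_{k}$ by ascending processing time, $p_{j_{1}}\le\dots\le p_{j_{n_{k}}}$, and sweep the machines in the order $\sigma$. At step $\ell$, machine $\sigma(\ell)$ first tentatively takes the $y_{\sigma(\ell),k}$ smallest still-unassigned jobs, which gives the right count but in general too small a volume. I then close the gap to $y_{\sigma(\ell),k}z_{\sigma(\ell),k}$ by a fractional swap: exchange a fractional mass of the smallest among those jobs against an equal fractional mass of a larger, still-unassigned job. The count is preserved, and conditions~\eqref{eq:min-jobs-bound} and~\eqref{eq:monotonicity} make the swap feasible: \eqref{eq:min-jobs-bound} says that for any prefix of machines the cumulative target volume dominates the cumulative size of the correspondingly many smallest jobs, so the swap mass required is non-negative and a larger unassigned job always remains available to absorb it, while \eqref{eq:monotonicity} justifies the sweep order by ensuring that later machines genuinely want larger jobs. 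When the sweep reaches the last machine, \eqref{eq:total_volume_weak} guarantees that the residual target volume matches the remaining job volume up to an additive $\delta\epsilon\cdot p_{\max}$.

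Summing the per-type errors over the $1/\epsilon$ classes gives a total per-machine load error of at most $1/\epsilon\cdot\delta\epsilon\cdot p_{\max}=\delta\cdot p_{\max}$, which combined with~\eqref{eq:bounds} places $\sum_{j}p_{j}x_{i,j}$ into $[\ell_{i}-\delta\cdot p_{\max},\,u_{i}+\delta\cdot p_{\max}]$ as demanded by slot-MILP'. The running time is dominated by one sort per class followed by a linear sweep producing $O(n)$ fractional pieces in total, giving $O(mn)$ per class and $O(mn/\epsilon)\le O(mn^{2})$ overall, comfortably matching the claimed bound.

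The main obstacle is making the swap step rigorous and showing it always succeeds. I would carry an inductive invariant that right before processing $\sigma(\ell)$, the cumulative committed volume equals $\sum_{\ell'<\ell}y_{\sigma(\ell'),k}z_{\sigma(\ell'),k}$ and the still-unassigned mass consists, up to small fractional remnants left by earlier swaps, of the largest jobs in $\jobs_{k}$. Under this invariant, \eqref{eq:min-jobs-bound} applied at index $\ell$ forces the required upward swap mass to be non-negative, and \eqref{eq:monotonicity} together with~\eqref{eq:total_volume_weak} guarantees that the available larger jobs can supply it without overshooting. Once the invariant is maintained for every $\ell$, the per-type volume error concentrates on the last machine and is bounded by $\delta\epsilon\cdot p_{\max}$, completing the argument.
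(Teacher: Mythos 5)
Your construction hinges on an invariant you never establish: that every machine (except possibly the last one per class) can be filled with exactly $y_{i,k}$ job-count from the leftover pool to a volume of \emph{exactly} $y_{i,k}z_{i,k}$, with the residual error ``concentrating on the last machine.'' Neither half of this follows from \eqref{eq:min-jobs-bound} and \eqref{eq:monotonicity} as you claim. First, the target may be unreachable from above: \eqref{eq:total_volume_weak} lets $\sum_{i}y_{i,k}z_{i,k}$ exceed $\sum_{j\in\jobs_k}p_j$ by up to $\delta\epsilon\cdot p_{\max}$, so at an intermediate machine the maximum volume obtainable with the prescribed count from what remains can fall short of $y_{i,k}z_{i,k}$; then the exact-commit invariant breaks and the deficit need not sit on the last machine. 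Second, and more seriously, you never rule out a \emph{forced overshoot}, i.e.\ that the smallest volume realizable with count $y_{i,k}$ from the leftover pool exceeds $y_{i,k}z_{i,k}$. Condition \eqref{eq:min-jobs-bound} is one-sided (prefix lower bounds only); whether overshoot can be avoided depends delicately on which larger jobs earlier machines drew their top-up mass from, a choice your sweep leaves unspecified and unanalysed. An overshoot is fatal: \eqref{eq:bounds} only gives $\sum_k y_{i,k}z_{i,k}\le u_i+\delta p_{\max}$, so concluding that the load is at most $u_i+\delta p_{\max}$ requires $\sum_{j\in\jobs_k}p_jx_{i,j}\le y_{i,k}z_{i,k}$ for \emph{every} $i,k$ --- exactly the statement your argument does not prove.

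The paper's proof avoids both issues by never trying to hit the targets exactly. It only enforces the one-sided inequality $\sum_{j\in\jobs_k}p_jx_{i,j}\le y_{i,k}z_{i,k}$ for all $i,k$: starting from the assignment of consecutive blocks of smallest jobs in the order $\sigma$, an overloaded machine $i$ has a job $j$ with $p_j>z_{i,k}$, and by \eqref{eq:min-jobs-bound} some earlier machine $i'$ is strictly underloaded, so by \eqref{eq:monotonicity} it carries a job $j'$ with $p_{j'}<z_{i',k}\le z_{i,k}<p_j$; fractionally swapping $j$ and $j'$ repairs $i$, and each pair is touched at most once, giving $O(mn^2)$. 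The lower bound then comes for free: since every job is fully assigned and, by \eqref{eq:total_volume_weak}, the targets over-cover each class's volume by at most $\delta\epsilon\cdot p_{\max}$, no pair $(i,k)$ can be below $y_{i,k}z_{i,k}-\delta\epsilon\cdot p_{\max}$, and summing over the $1/\epsilon$ classes together with \eqref{eq:bounds} yields the window $[\ell_i-\delta p_{\max},\,u_i+\delta p_{\max}]$. To salvage your route you would have to prove a two-sided (majorization-type) feasibility statement for exact fills, which the hypotheses of the lemma do not hand you directly; as written, the proposal has a genuine gap.
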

\begin{proof}[Proof sketch]
We first find a (fractional) assignment vector $\left\{ x_{i,j}\right\} _{i\in\machines,j\in\jobs}$ satisfying
\begin{equation}
\sum_{j\in\jobs_{k}}p_{j}x_{i,j}\le y_{i,k}z_{i,k}\label{property-2-desired-main}
\end{equation}
for all $i \in \machines$ and $k \in \{1,\dots,1/\epsilon\}$. We do so by assigning jobs for each type $k$ independently. We start with a solution that assigns the smallest jobs to the first machine, and so on. Whenever a machine does not satisfy~\eqref{property-2-desired-main}, we can fractionally swap jobs between this machine and a machine with smaller index.
Once we have established that no machine is overloaded we establish that no machine is underloaded by too much, i.e. the following holds for all $i$ and $k$
\begin{equation}
    \sum_{j \in \jobs_k}p_jx_{i,j} \ge y_{i,k}z_{i,k} - \delta \epsilon\cdot p_{\max}. \label{eq:machine_under}
\end{equation}
\end{proof}
The goal of our DP is to compute vectors $\left\{ y_{i,k},z_{i,k}\right\} _{i\in\machines,k\in\{1,\dotsc,1/\epsilon\}}$
that satisfy the conditions due to Lemma~\ref{lem:sufficient-conditions}.
The key insight is now that when we consider the next machine $i'$
in the ordering, we do not need to remember all vectors $\left\{ y_{i,k},z_{i,k}\right\} _{i\in\machines,k\in\{1,\dotsc,1/\epsilon\}}$
for all previously considered machines~$i$, but it suffices to remember
the number of previously assigned jobs from each set $\jobs_{k}$,
the current left hand side of inequality~\eqref{eq:min-jobs-bound}
for each $k$, the vector $\left\{ z_{i'',k}\right\} _{k\in\{1,\dotsc,1/\epsilon\}}$
of the previously considered machine $i''$, and for each type of
machines the number of previously guessed machines of this type. At each iteration the machine then guesses the type of machine $i$ and the vectors $\left\{ y_{i,k},z_{i,k}\right\} _{i\in\machines,k\in\{1,\dotsc,1/\epsilon\}}$ such that the new solution consisting of the guess for machine $i$ and the remembered solution for the previous machines satisfies the conditions stated in Lemma~\ref{lem:solve-ILP-faster}. If none of the guesses satisfies these conditions the DP cell corresponding to this iteration remains empty. 

\begin{lem}
\label{lem:solve-ILP-faster}Let $\delta>0$. There is an algorithm
with a running time of $m^{K+1}(\frac{n}{\delta\epsilon})^{O(1/\epsilon)}$ which
either finds a $\delta$-approximate solution to the slot-MILP or asserts
that the slot-MILP is infeasible.
\end{lem}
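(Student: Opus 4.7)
The plan is to implement the dynamic program sketched in the paragraph preceding the lemma and to bound the size of its state space carefully. The first step is to discretize the $z$-values: since $z_{i,k} \in ((k-1)\epsilon p_{\max}, k\epsilon p_{\max}]$, I round each $z_{i,k}$ down to the nearest multiple of $\eta := \delta\epsilon p_{\max}/n$, giving $O(n/\delta)$ grid values per type $k$. Because $\sum_i y_{i,k} \le n$, the total per-type rounding error on the volume $\sum_i y_{i,k} z_{i,k}$ is at most $n\eta = \delta\epsilon p_{\max}$, which is absorbed by the slack in~\eqref{eq:total_volume_weak}; likewise the per-machine error is at most $\sum_k y_{i,k}\eta \le \delta\epsilon p_{\max} \le \delta p_{\max}$, absorbed by~\eqref{eq:bounds}. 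Hence the solution guaranteed by Lemma~\ref{lem:structural-property-2} survives the discretization as a feasible trajectory for slot-MILP'.

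The DP processes machines along the ordering $\sigma$ of Lemma~\ref{lem:structural-property-2}. A state records: (i) the current position $\ell \in \{0,\dots,|\machines|\}$; (ii) for each of the $K$ machine types, the number of machines of that type already placed, contributing $(|\machines|+1)^K$ options; (iii) for each $k \in \{1,\dots,1/\epsilon\}$, the count $N_k$ of type-$k$ jobs already assigned, with $n^{O(1/\epsilon)}$ options; (iv) for each $k$, the partial sum $S_k := \sum_{\ell'=1}^\ell y_{\sigma(\ell'),k} z_{\sigma(\ell'),k}$ stored on the $\eta$-grid, with $(n/\delta)^{O(1/\epsilon)}$ options; and (v) the discretized $z$-vector of the last placed machine, another $(n/\delta)^{O(1/\epsilon)}$ options. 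A transition from position $\ell$ to $\ell+1$ guesses the type $\kappa$ of $\sigma(\ell+1)$ ($K$ options), its vector $(y_{\sigma(\ell+1),k})_k$ with entries in $\{0,\dots,n\}$, and its discretized $(z_{\sigma(\ell+1),k})_k$ with $z_{\sigma(\ell+1),k} \ge z_{\sigma(\ell),k}$ for every $k$ to enforce~\eqref{eq:monotonicity}. The guess is accepted iff the updated sum $S_k + y_{\sigma(\ell+1),k} z_{\sigma(\ell+1),k}$ is at least $\sum_{j\in \jobs_k^{\min}(N_k+y_{\sigma(\ell+1),k})} p_j$ for every $k$, enforcing~\eqref{eq:min-jobs-bound}, and the load $\sum_k y_{\sigma(\ell+1),k} z_{\sigma(\ell+1),k}$ of the new machine lies in $[\ell_\kappa, u_\kappa + \delta p_{\max}]$, enforcing~\eqref{eq:bounds}. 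The DP accepts at $\ell=|\machines|$ only if $N_k=|\jobs_k|$ and $\sum_{j\in\jobs_k}p_j \le S_k \le \sum_{j\in\jobs_k}p_j + \delta\epsilon p_{\max}$ for every $k$, enforcing~\eqref{eq:total_volume_weak}. From any accepting state we recover the $y$-vectors by backtracking and invoke Lemma~\ref{lem:sufficient-conditions} in $O(mn^2)$ additional time to produce the matching fractional assignment $x$; if no state accepts, we report infeasibility of the slot-MILP.

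The main subtlety is to argue that the summary information carried in the DP state really is sufficient to check all remaining constraints. This holds because~\eqref{eq:min-jobs-bound} at every future position depends on the past only through $(N_k, S_k)$: the right-hand side $\sum_{j\in\jobs_k^{\min}(\cdot)} p_j$ is a function of the cumulative count alone, and the left-hand side is exactly $S_k$ plus contributions of machines yet to be placed; condition~\eqref{eq:monotonicity} requires only the $z$-vector of the immediately preceding machine; and conditions~\eqref{eq:bounds} and~\eqref{eq:total_volume_weak} are local to a single machine respectively global at the terminal state. Correctness of the infeasibility verdict follows from Lemma~\ref{lem:structural-property-2} together with the discretization argument. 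Multiplying the state count $|\machines| \cdot |\machines|^K \cdot (n/(\delta\epsilon))^{O(1/\epsilon)}$ by per-transition work $K \cdot (n/(\delta\epsilon))^{O(1/\epsilon)}$ and adding the $O(mn^2)$ reconstruction step yields the claimed running time $m^{K+1}(n/(\delta\epsilon))^{O(1/\epsilon)}$.
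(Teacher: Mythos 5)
Your overall construction is the same as the paper's: process the machines in the order guaranteed by Lemma~\ref{lem:structural-property-2}, summarize the past by the per-type job counts, the accumulated volumes $S_k$, the machine-type counters and the $z$-vector of the last machine, enforce \eqref{eq:min-jobs-bound}, \eqref{eq:monotonicity}, \eqref{eq:bounds} and the terminal condition \eqref{eq:total_volume_weak} during the transitions, and finish by backtracking and invoking Lemma~\ref{lem:sufficient-conditions}; the state-space and transition counting matches the claimed bound. The sufficiency-of-the-summary argument and the complexity accounting are fine.

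There is, however, a concrete flaw in your discretization step: you round each $z_{i,k}$ \emph{down} to the grid and claim the error is ``absorbed by the slack'' in \eqref{eq:total_volume_weak} and \eqref{eq:bounds}. But the slack in those two conditions is one-sided: \eqref{eq:bounds} allows $u_i+\delta p_{\max}$ above but requires $\ell_i$ exactly from below, \eqref{eq:total_volume_weak} allows $+\delta\epsilon p_{\max}$ above but requires $\sum_{j\in\jobs_k}p_j$ exactly from below, and \eqref{eq:min-jobs-bound} has no slack at all. Rounding down decreases every quantity $y_{i,k}z_{i,k}$, so the discretized image of the feasible solution from Lemma~\ref{lem:structural-property-2} can violate the lower bound of \eqref{eq:bounds} by up to $\delta\epsilon p_{\max}$, the left inequality of \eqref{eq:total_volume_weak}, and \eqref{eq:min-jobs-bound} itself --- yet your DP enforces all three of these exactly. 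Hence the witness need not survive as an accepted trajectory, and the algorithm could assert infeasibility of a feasible slot-MILP, which is exactly the direction of correctness the lemma requires. The repair is immediate and is what the one-sided slacks in Lemma~\ref{lem:sufficient-conditions} are designed for: round each $z_{i,k}$ \emph{up} to the grid. Then \eqref{eq:min-jobs-bound}, \eqref{eq:monotonicity}, and the exact lower bounds are preserved, while the upward errors (at most $\delta\epsilon p_{\max}$ per machine and per type, since $\sum_k y_{i,k}\le n$ and $\sum_i y_{i,k}\le n$) are absorbed by the $+\delta p_{\max}$ and $+\delta\epsilon p_{\max}$ slacks; with that single change the rest of your argument goes through as written.
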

The proof of Lemma~\ref{lem:solve-ILP-faster} and a detailed description of the dynamic program are given in Appendix~\ref{apx:dp_details}.

\section{\label{sec:rounding-relaxation}Rounding the relaxation}

We assume that we are given a solution to the slot-MILP via the algorithm
due to Lemma~\ref{lem:solve-ILP} or an approximate solution, i.e., a solution to slot-MILP' via the algorithm due to Lemma~\ref{lem:solve-ILP-faster}.
In this section, we describe an algorithm with a running time of $n^{O(1)}$
that computes an integral solution to the slot-MILP (or slot-MILP') which for each machine
$i\in\machines$ violates the target load by at most $\epsilon\cdot p_{\max}$. For a solution to the slot-MILP this implies that it holds that $\sum_{j\in\jobs}p_{j}x_{i,j}\in[\ell_{i}-\epsilon\cdot p_{\max},u_{i}+\epsilon\cdot p_{\max}]$. For a solution to slot-MILP' this implies that the target load violation is given by the error made due to the approximate solution and due to the rounding, i.e., after rounding the solution it holds that $\sum_{j\in\jobs}p_{j}x_{i,j}\in[\ell_{i}-\delta\cdot p_{\max}-\epsilon\cdot p_{\max},u_{i}+\delta \cdot p_{\max} + \epsilon\cdot p_{\max}]$. Note that the running time bound is independent of $K$. In the following we show how the rounding procedure works and that the claims hold for an exact solution to slot-MILP. The same arguments hold if the initial solution is a solution to slot-MILP'.

We imagine that each machine $i\in\machines$ has $y_{i,k}$ slots
for the jobs in $\jobs_{k}$, for each $k\in\{1,...,1/\epsilon\}$.
We say that these slots are of \emph{type $k$. }Notice that $\sum_{i\in\machines}y_{i,k}=|\jobs_{k}|$.
We compute an initial solution by assigning each job $j\in\jobs_{k}$
to an arbitrary slot of type $k$. In this solution there might be
a machine $i$ whose load is not in $[\ell_{i}-\epsilon\cdot p_{\max},u_{i}+\epsilon\cdot p_{\max}]$,
i.e., the load is too small or too large. We present now a local search
algorithm that repeatedly swaps pairs of jobs from the same set $\jobs_{k}$
such that eventually each machine $i\in\machines$ has a load in $[\ell_{i}-\epsilon\cdot p_{\max},u_{i}+\epsilon\cdot p_{\max}]$.
Note that this maintains the number of jobs from each set $\jobs_{k}$
on each machine.

\subsection{Local search}

We describe how to perform one iteration of the local search algorithm. Each iteration aims at finding a pair of jobs that can be swapped. Let $\machines_{1}$ be the set of machines $i\in\machines$ that have a load strictly greater than $u_{i}+\epsilon\cdot p_{\max}$.
Consider a $k\in\{1,...,1/\epsilon\}$ such that a job $j\in\jobs_{k}$
is assigned to a machine $i\in\machines_{1}$. We would like to exchange
$j$ for a smaller job $j'\in\jobs_{k}$ that is assigned to a machine
$i'\notin\machines_{1}$. Thus, consider all jobs $j'\in\jobs_{k}$
with $p_{j'}<p_{j}$ which are assigned to a machine $i'\notin\machines_{1}$.
If the load of $i'$ is at most $u_{i'}$ then we exchange $j$ and
$j'$ which completes the swap. We try to perform such a swap for
each $k\in\{1,...,1/\epsilon\}$ such that a job $j\in\jobs_{k}$
is assigned to a machine in $\machines_{1}$. If we did not perform
a swap then let $\machines_{2}$ denote the set of machines $i'\in\machines$
having a job $j'$ which we tried to swap with a job $j$ on a machine
$i\in\machines_{1}$, i.e., $\machines_{2}$ contains all machines
$i'\in\machines\setminus\machines_{1}$ for which there exists a machine
$i\in\machines_{1}$ and a $k\in\{1,...,1/\epsilon\}$ such that there
is a job $j\in\jobs_{k}$ assigned to $i$ and a job $j'\in\jobs_{k}$
assigned to $i'$ with $p_{j'}<p_{j}$. Observe that each machine
$i'\in\machines_{2}$ has a load of more than $u_{i'}$.

Now we repeat this procedure: Suppose that we constructed sets of machines
$\machines_{1},...,\machines_{\ell}$. For each $k\in\{1,...,1/\epsilon\}$
such that there is a job $j\in\jobs_{k}$ assigned to a machine $i\in\machines_{\ell}$
consider all jobs $j'\in\jobs_{k}$ with $p_{j'}<p_{j}$, which are
assigned to a machine $i'\notin\machines_{1}\cup\dotsc\cup\machines_{\ell}$.
If the load on one such machine $i'$ is at most $u_{i'}$, then we
exchange $j$ and $j'$ which completes the swap. In particular, we
do not reuse the constructed sets $\machines_{1},\dotsc,\machines_{\ell}$
for the next swap but we forget these sets before the next swap starts.
Otherwise, if each considered machine $i'$ has a load strictly more
than $u_{i'}$ we construct a set $\machines_{\ell+1}$ consisting
of all these machines $i'$ and continue in the current iteration.

Suppose that at the beginning of a swap there is no machine $i\in\machines$
that has a load strictly greater than $u_{i}+\epsilon\cdot p_{\max}$. Then, a second stage of the local search algorithm takes place. We take the current solution and perform an analogous procedure in order to ensure that each machine $i\in\machines$ has a load of
at least $\ell_{i}-\epsilon\cdot p_{\max}$. Initially define $\machines_{1}$
to be the set of all machines $i\in\machines$ with a load strictly
less than $\ell_{i}-\epsilon\cdot p_{\max}$. Suppose that we constructed
sets of machines $\machines_{1},...,\machines_{\ell}$. For each $k\in\{1,...,1/\epsilon\}$
such that there is a job $j\in\jobs_{k}$ assigned to a machine $i\in\machines_{\ell}$
consider all jobs $j'\in\jobs_{k}$ with $p_{j'}>p_{j}$, which are
assigned to a machine $i'\notin\machines_{1}\cup\dotsc\cup\machines_{\ell}$.
If the load on one such machine $i'$ is at least $\ell_{i'}$, then
we exchange $j$ and $j'$ which completes the swap. Otherwise, if
each such machine $i'$ has a load of strictly less than $\ell_{i'}$
we construct a set $\machines_{\ell+1}$ consisting of all these machines
$i'$ and continue. The algorithm terminates if the load of each machine
$i\in\machines$ is within $[\ell_{i}-\epsilon\cdot p_{\max},u_{i}+\epsilon\cdot p_{\max}]$. In the following we use first stage to refer to the part of the algorithm that establishes that all loads are at most $u_i + \epsilon \cdot p_{\max}$ and second stage to the part of the algorithm that establishes that all loads are at least $\ell_i - \epsilon \cdot p_{\max}$.

\subsection{Correctness and running time}
We show now that the algorithm terminates in $n^{O(1)}$ time. Then, by construction it outputs a solution in which each
machine $i\in\machines$ has a load in the interval $[\ell_{i}-\epsilon\cdot p_{\max},u_{i}+\epsilon\cdot p_{\max}]$.

We first show that in each iteration of the first stage of the algorithm we can find a pair of jobs $j,j'$ to swap. 

\begin{lem}\label{lem:ls_swaps}
In each iteration of the algorithm finds two jobs $j,j'$ that it swaps and finding such a pair can be done in $O(n^2)$.
\end{lem}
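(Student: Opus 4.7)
The plan is a proof by contradiction combined with a volume comparison against the slot-MILP. I focus on an iteration of the first stage where $\machines_1$ is the set of machines with load strictly greater than $u_i + \epsilon\cdot p_{\max}$; the second stage is symmetric. Suppose that such an iteration constructs layers $\machines_1,\ldots,\machines_L$ and terminates without performing a swap. Let $\mathcal{S} := \machines_1 \cup \cdots \cup \machines_L$.

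The key structural observation I would establish is that, in the current integral assignment, the set of type-$k$ jobs sitting on $\mathcal{S}$ is downward-closed for every $k$: if a job $j \in \jobs_k$ sits on some $i \in \machines_\ell$, then every $j' \in \jobs_k$ with $p_{j'} < p_j$ also sits on a machine in $\mathcal{S}$. Indeed, when the algorithm processed $\machines_\ell$ it considered exactly such candidates $j'$ on machines $i' \notin \machines_1 \cup \cdots \cup \machines_\ell$; since no swap occurred, each such $i'$ must have had load $> u_{i'}$ and thus joined $\machines_{\ell+1} \subseteq \mathcal{S}$ (or $\machines_{\ell+1}$ was empty, in which case no such $i'$ existed at all). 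Consequently, for each $k$ the type-$k$ jobs assigned integrally to $\mathcal{S}$ are precisely the $n_k := \sum_{i\in\mathcal{S}} y_{i,k}$ smallest jobs in $\jobs_k$, while the type-$k$ jobs on $\machines \setminus \mathcal{S}$ are the $|\jobs_k| - n_k$ largest.

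I then derive a volume contradiction. In the slot-MILP, $\sum_{i\notin\mathcal{S}}\sum_{j\in\jobs_k} x_{i,j} = |\jobs_k|-n_k$, so the fractional type-$k$ volume placed on $\machines\setminus\mathcal{S}$ is at most the maximum volume achievable by $|\jobs_k|-n_k$ fractional units of type-$k$ mass, i.e.\ the sum of the $|\jobs_k|-n_k$ largest jobs in $\jobs_k$. But this equals exactly the integral type-$k$ volume on $\machines \setminus \mathcal{S}$. Subtracting from $\sum_{j\in\jobs_k} p_j$ and summing over $k$ yields that the slot-MILP load on $\mathcal{S}$ is at least the integral load on $\mathcal{S}$. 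Constraint~\eqref{eq:upper-bound-load} bounds the former by $\sum_{i\in\mathcal{S}} u_i$, whereas the latter strictly exceeds $\sum_{i\in\mathcal{S}} u_i + \epsilon\cdot p_{\max}$ since every $i \in \machines_1$ (which is nonempty, as we are in the first stage) overshoots $u_i$ by more than $\epsilon\cdot p_{\max}$ and every other machine in $\mathcal{S}$ still overshoots $u_i$. This contradiction shows that a swap must be found.

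For the $O(n^2)$ running time bound, I would argue that throughout a single iteration each ordered pair $(j,j')$ of jobs in the same class $\jobs_k$ is examined as a candidate swap at most once: $j$ plays the role of the source only while its machine lies in the freshly-added layer $\machines_\ell$, and once $j'$'s machine has entered $\mathcal{S}$ (either as a swap target, terminating the iteration, or as part of the next layer) it is never considered again. Summing over all such pairs gives $O(n^2)$ total work per iteration, regardless of $L$. The main obstacle is the downward-closedness claim: it must cover not only the final layer $\machines_L$ but all earlier ones, and this is handled uniformly by the ``swap-or-join-next-layer'' dichotomy above; the volume comparison and the pair-counting argument then fall out cleanly.
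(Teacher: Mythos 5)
Your proposal is correct and follows essentially the same route as the paper's proof: a contradiction argument showing that if no swap exists, the union $\machines_1\cup\dots\cup\machines_\ell$ holds the smallest jobs of each type while every machine in it strictly exceeds $u_i$, which is incompatible with the feasible fractional slot-MILP solution sharing the same slot counts $y_{i,k}$, plus the same pair-counting argument for the $O(n^2)$ bound. Your write-up merely spells out the downward-closedness and the volume comparison more explicitly than the paper's sketch.
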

\begin{proof}[Proof]
We prove this for the first stage of the algorithm. A similar argument can be shown for the second stage (see Appendix~\ref{apx:det_ls}).
Suppose towards contradiction that the algorithm does not find two jobs $j,j' \in \jobs_k$ such that $j$ is assigned to a machine $i \in \machines_\ell$ with load more than $u_i+\epsilon\cdot p_{\max}$ and job $j'$ is assigned to a machine $i' \notin \machines_{1}\cup\dotsc\cup\machines_{\ell}$ and $p_j' < p_j$. This means that the machines in $\machines_{1}\cup\dotsc\cup\machines_{\ell}$ are assigned the smallest jobs of type $k$ while each machine having load at least $u_i$. Hence, even a fractional assignment of jobs cannot reduce the total load on the machines in $\machines_1 \cup \dots \cup \machines_\ell$. This means that in a fractional assignment at least one machine must have a load greater than $u_i$. This gives a contradiction. 

As every job is fully assigned to a machine we have that each job of a type $k$ which is assigned to a machine in the sets $\machines_1,\machines_2,\dots,\machines_\ell$ is considered exactly once and compared to every other job of type $k$ not assigned to these machines. As the existence of a pair was shown for an arbitrary $k$ this gives a worst case running time of $O(n^2)$.
\end{proof}
Next, we outline the proof that the first stage of the algorithm always terminates and that this happens after at most $O(n^3)$ swaps. As Lemma~\ref{lem:ls_swaps} states that each swap can be done in $O(n^2)$ this shows that the first stage finishes in $n^{O(1)}$. To this end, we give an alternative formulation of the first stage of the algorithm algorithm as a repeated breadth-first search (BFS).  We construct a weighted, directed graph. It contains one special vertex, the source $s$, and one vertex for each slot, that is $|\jobs|+1$
vertices in total. Each non-source vertex is associated with a machine
and a size class. The slots of the same machine form a clique: There
is an edge from each slot to the other with weight $0$. Furthermore,
there is an edge of weight $1$ from slot $u$ to $v$, when (1) $u$
and $v$ are not on the same machine, (2) $u$ and $v$ belong to
the same size class, and (3) $u$ is currently assigned a larger job
than $v$. Additionally, there is an edge of weight $0$ from the source
to every slot on a machine with load more than $u_i+\epsilon\cdot p_{\max}$. The algorithm performs a BFS on the graph above starting in $s$. Once it reaches a machine with load at most $u_i$, it selects the edge $(u,v)$ over which the machine was reached and swaps the jobs assigned to the slots $u$ and $v$. This is continued until every machine $i$ is assigned a load at most $u_i + \epsilon\cdot p_{\max}$.
We first show, that the distance from $s$ to any slot in the graph does not decrease by a swap.

\begin{lem}\label{lem:ls_swapdis}
The distance from $s$ to any slot does not decrease by a swap. 
\end{lem}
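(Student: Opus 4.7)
Since slots on the same machine are pairwise connected by weight-$0$ edges, all slots on a given machine share the same distance from $s$; write $d(i), d'(i)$ for this common distance before and after the swap. Suppose towards contradiction that some machine has $d'(i) < d(i)$ and pick $M$ realizing the minimum of $d'(i)$ over all such machines. Consider a shortest path from $s$ to any slot on $M$ in the new graph and look at the edge $(u,v)$ by which it first enters $M$. If this edge is a source edge $s \to v$, then since source edges can only disappear under the swap (the machine of $u^{*}$ loses load and the machine of $v^{*}$ gains strictly less than $\epsilon \cdot p_{\max}$ of load while starting from load at most $u_{i}$), the edge also existed in the old graph, yielding $d(M) = 0 = d'(M)$, a contradiction. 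Otherwise $u$ lies on some other machine $M'$ and the edge $(u,v)$ has weight $1$; then $d'(M') = d'(M) - 1$ and, by the minimal choice of $M$, $d(M') \le d'(M') < d(M) - 1$, so $d(M') \le d(M) - 2$. If $(u,v)$ already existed in the old graph, the triangle inequality gives $d(M) \le d(M') + 1 \le d(M) - 1$, a contradiction, so $(u,v)$ must be a newly created edge.

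The next step is to catalogue the newly created edges. Let $u^{*}, v^{*}$ be the swapped slots, and let $p, q$ denote the processing times originally at $u^{*}, v^{*}$, so $p > q$ and after the swap $u^{*}$ carries a job of size $q$ and $v^{*}$ a job of size $p$. Since the weight-$1$ edges are characterised by ``same size class, different machines, tail has a larger job than head'', the new edges are exactly: (i) the reversed swap edge $(v^{*}, u^{*})$; (ii) edges $(w, u^{*})$ for slots $w$ in the same size class on a machine other than $u^{*}$'s with $q < p_{w} \le p$; and (iii) edges $(v^{*}, w)$ for slots $w$ in the same size class on a machine other than $v^{*}$'s with $q \le p_{w} < p$.

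The contradictions come from combining $d(M') \le d(M) - 2$ with the old-graph BFS invariant $d(i_{v^{*}}) = d(i_{u^{*}}) + 1$ (since the BFS reached $i_{v^{*}}$ from $u^{*}$ via a weight-$1$ edge) together with ``mirror'' edges that still existed in the old graph. In case (i), $M = i_{u^{*}}$ and $M' = i_{v^{*}}$, so $d(M') = d(M) + 1$, directly contradicting $d(M') \le d(M) - 2$. In case (ii), $M = i_{u^{*}}$ and $M'$ is the machine of $w$; since $p_{w} > q$, there is an old edge from $w$ to $v^{*}$ of weight at most $1$, so $d(i_{v^{*}}) \le d(M') + 1 \le d(M) - 1$, while $d(i_{v^{*}}) = d(M) + 1$, a contradiction. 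Case (iii) is symmetric: $M'$ is the machine of $v^{*}$, $M$ is the machine of $w$, and the old edge from $u^{*}$ to $w$ (of weight at most $1$ since $p > p_{w}$) yields $d(M) \le d(i_{u^{*}}) + 1 = d(i_{v^{*}}) = d(M')$, contradicting $d(M') \le d(M) - 2$. The main subtleties I expect are verifying the source-edge claim (which uses that the two swapped jobs lie in a size class of width $\epsilon \cdot p_{\max}$) and, in cases (ii)--(iii), noting that the relevant ``mirror'' old edge has weight $0$ rather than $1$ when $w$ happens to share a machine with $v^{*}$ or $u^{*}$, but still yields the required inequality $d(i_{v^{*}}) \le d(M') + 1$, respectively $d(M) \le d(i_{u^{*}}) + 1$.
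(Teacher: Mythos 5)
Your proof is correct and rests on the same ingredients as the paper's: no new source edges (because the receiving machine had load at most $u_i$ and gains less than $\epsilon\cdot p_{\max}$), the observation that new weight-$1$ edges only involve the two swapped slots and are ``mirrored'' by old edges through the other swapped slot, and the BFS fact $d(i_{v^*})=d(i_{u^*})+1$. The only difference is presentational: where the paper shows every new edge $(x,y)$ satisfies $d(y)\le d(x)+1$ and then concludes in one line that such additions cannot shorten any distance, you run a minimal-counterexample argument over machines, which makes that final step more explicit but does not change the substance.
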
 
Using Lemma~\ref{lem:ls_swapdis} and a potential function that is bounded by $n^3$ we show the following.

\begin{lem}\label{lem:ls_term}
The first stage of the algorithm terminates after at most $O(n^{3})$ swaps. 
\end{lem}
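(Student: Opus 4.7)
The plan is to exhibit a potential function $\Phi$ on configurations of the algorithm that is bounded by $O(n^{3})$, does not decrease across a swap (by Lemma~\ref{lem:ls_swapdis}), and strictly increases with each swap. Together with Lemma~\ref{lem:ls_swaps}, which bounds the cost of a single swap by $O(n^{2})$, this yields the stated $n^{O(1)}$ running time for the first stage.

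Concretely, I would take
\[
\Phi \;=\; \sum_{v\text{ slot}} d(s,v),
\]
where $d(s,v)$ is the $0/1$-BFS distance in the auxiliary graph associated with the current assignment, with the convention $d(s,v)=n+1$ whenever $v$ is unreachable from $s$. Since there are at most $n$ slot vertices and each term is bounded by $n+1$, we get $\Phi \le O(n^{2}) \le O(n^{3})$. Monotonicity of $\Phi$ is immediate from Lemma~\ref{lem:ls_swapdis}, since that lemma asserts precisely that no $d(s,v)$ decreases across a swap.

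For the strict increase I would exploit the fact that the algorithm always swaps along an edge $(u,v)$ found by the BFS with $d(s,v)=d(s,u)+1$, where $v$ sits on a machine whose load is at most $u_{i'}$ and $u$ holds a strictly larger job than $v$ in the same size class. The point is that after the swap this particular shortest-path certificate for $v$ is destroyed: the weight-$1$ edge $(u,v)$ is reversed, and the machine carrying $u$ may additionally drop out of the overloaded set, removing some weight-$0$ edges incident to $s$. I would then argue that the slot $v$ in particular must see its distance from $s$ go up by at least one, which together with Lemma~\ref{lem:ls_swapdis} forces $\Phi$ to strictly increase by at least $1$ in every swap.

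The main obstacle is a careful case analysis of how the auxiliary graph mutates in a single swap: both the weight-$1$ edges between machines and the weight-$0$ edges out of $s$ depend on the current assignment, so many edges can appear or disappear simultaneously. I will need to rule out the possibility that a new $s$-to-$v$ path of length $d(s,u)+1$ is created through newly-available edges. I expect this can be handled by observing that any shortest $s$-to-$v$ path after the swap must enter the two affected machines through $u$ or $v$, and then invoking the pre-swap distances (lower-bounded via Lemma~\ref{lem:ls_swapdis}) to rule out any such replacement path of length at most $d(s,u)+1$. Combining the $O(n^{3})$ upper bound on $\Phi$ with the unit strict increase per swap gives the claimed $O(n^{3})$ bound on the number of swaps.
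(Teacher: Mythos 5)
Your overall framework (a polynomially bounded potential that never decreases by Lemma~\ref{lem:ls_swapdis} and strictly increases with each swap) is the right one, but the specific potential $\Phi=\sum_{v} d(s,v)$ does not work, and the step you flag as the "main obstacle" is in fact where the argument breaks. Lemma~\ref{lem:ls_swapdis} only gives monotonicity; after a swap it is entirely possible that \emph{no} slot distance strictly increases, including that of the target slot $v$. Since all slots of one machine are joined by weight-$0$ edges, $d(s,v)$ is really the distance of $v$'s machine, and even though the edge $(u,v)$ disappears, the machine of $v$ can still be reached at distance $d(u)+1$ after the swap: for instance through some slot $w$ on a third machine with $d(w)=d(u)$ whose job is larger than the job newly placed in $v$, or through another slot $v''$ on $v$'s own machine that still receives a weight-$1$ edge from distance $d(u)$. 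In such a configuration every $d(s,\cdot)$ is unchanged, your $\Phi$ stalls, and no bound on the number of swaps follows. There is no easy local repair of the "enter through $u$ or $v$" claim, because the replacement path need not touch $u$ or $v$ at all.

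The paper sidesteps this by tracking \emph{jobs} rather than slots and weighting by size rank: with $j_1,\dotsc,j_n$ the jobs in increasing order of size, the potential is $\sum_{i=1}^{n} i\cdot d(j_i)$, where $d(j_i)$ is the distance of the slot currently holding $j_i$. In a swap the larger job $j_h$ moves from the slot at distance $d(u)$ to the slot at distance $d(u)+1$ and the smaller job $j_k$ moves the other way, so even if all slot distances are unchanged, the pair contributes $k\cdot d(j_h)+h\cdot d(j_k)$ afterwards versus $k\cdot d(j_k)+h\cdot d(j_h)$ before, a strict increase since $h>k$ and $d(j_k)>d(j_h)$; Lemma~\ref{lem:ls_swapdis} handles all other jobs. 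The potential is integral and at most $n^3$, giving the $O(n^3)$ bound. You should replace your unweighted sum with this (or an equivalent rank-weighted) potential; the rest of your outline then goes through.
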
 
The detailed proof of Lemmas~\ref{lem:ls_swapdis} and~\ref{lem:ls_term} are moved to Appendix~\ref{apx:det_ls}. Additionally, the counterparts for the second stage are shown (see Lemmas~\ref{lem:ls_swapdis_sec} and~\ref{lem:ls_term_sec}).  It is important to note that in the second stage the goal is to fix machines that have load less than $\ell_i - \epsilon\cdot p_{\max}$. We do so by swapping larger jobs on machines with load at least $\ell_i$ with smaller jobs (of the same type) on underloaded machines. As we never increase the load of a machine that has load at least $\ell_i - \epsilon\cdot p_{\max}$ this process does not lead to any violations of the machine upper bounds. Hence, Lemmas~\ref{lem:ls_swaps}-\ref{lem:ls_term} and Lemmas~\ref{lem:ls_swapdis_sec} and~\ref{lem:ls_term_sec} give the following result on the running time and additive approximation guarantee of the local search algorithm.

\begin{lem}
\label{lem:rounding-relaxation}Given a solution to the slot-MILP, in time
$n^{O(1)}$ we can compute an integral solution to slot-MILP such that
$\sum_{j\in\jobs}p_{j}x_{i,j}\in[\ell_{i}-\epsilon\cdot p_{\max},u_{i}+\epsilon\cdot p_{\max}]$
for each machine $i\in\machines$.
\end{lem}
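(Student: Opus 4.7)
The plan is to collect the preceding lemmas into a complete proof. Starting from a slot-MILP solution $(x,y)$, I would discard the fractional $x$ values but retain the integer variables $y_{i,k}$ as the prescribed number of slots of type $k$ on machine $i$; since $\sum_{i\in\machines} y_{i,k} = |\jobs_k|$, one can assign each job in $\jobs_k$ to an arbitrary type-$k$ slot, obtaining an integral starting point whose per-type slot counts already match those of the slot-MILP. Although this initial assignment may violate the target load intervals, the local search will repair these violations without ever changing the $y_{i,k}$ values, since every swap exchanges two jobs in the same class $\jobs_k$.

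I would then run the two stages sequentially. For the first stage (eliminating machines with load above $u_i + \epsilon \cdot p_{\max}$), Lemma~\ref{lem:ls_swaps} guarantees that each iteration finds a valid swap pair in $O(n^2)$ time, and Lemmas~\ref{lem:ls_swapdis} and~\ref{lem:ls_term} bound the number of iterations by $O(n^3)$ via a BFS-distance potential argument on the auxiliary weighted digraph. When the first stage terminates, every machine satisfies load $\leq u_i + \epsilon \cdot p_{\max}$. The second stage then fixes machines with load below $\ell_i - \epsilon \cdot p_{\max}$ via the symmetric procedure, and Lemmas~\ref{lem:ls_swapdis_sec} and~\ref{lem:ls_term_sec}, whose proofs mirror the first-stage arguments with inequalities reversed, yield analogous $O(n^3)$ swap counts and $O(n^2)$ per-swap cost.

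The key point to emphasise is that the second stage preserves the upper-bound invariant established by the first. Each second-stage swap replaces a small job $j$ on an underloaded machine $i \in \machines_\ell$ by a larger job $j'$ of the same type $\jobs_k$; since both processing times lie in a window of width $\epsilon \cdot p_{\max}$, we have $0 < p_{j'} - p_j \leq \epsilon \cdot p_{\max}$, so the receiving machine's load rises from strictly below $\ell_i$ to at most $\ell_i + \epsilon \cdot p_{\max} \leq u_i + \epsilon \cdot p_{\max}$, while the donating machine (whose load was at least $\ell_{i'}$ before the swap) loses at most $\epsilon \cdot p_{\max}$ and stays within $[\ell_{i'} - \epsilon \cdot p_{\max}, u_{i'} + \epsilon \cdot p_{\max}]$. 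Hence at termination every machine $i$ has load in $[\ell_i - \epsilon \cdot p_{\max}, u_i + \epsilon \cdot p_{\max}]$, as required.

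The main obstacle I anticipate is the existence-of-swap argument for the second stage (the analogue of Lemma~\ref{lem:ls_swaps}): one must show that if no valid swap were available, then the jobs of some type $k$ sitting on $\machines_1 \cup \dots \cup \machines_\ell$ would already be the \emph{largest} ones of that type, forcing even the fractional slot-MILP assignment to put too little load on these machines and contradicting constraint~\eqref{eq:lower-bound-load}. Once that dual argument is in place, summing across both stages gives a total running time of $O(n^3) \cdot O(n^2) = n^{O(1)}$, and the same reasoning applies verbatim when the starting point is a slot-MILP$'$ solution, merely inflating the guaranteed load intervals by an additional $\delta \cdot p_{\max}$ on each side.
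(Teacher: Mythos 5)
Your proposal is correct and follows essentially the same route as the paper: assign jobs arbitrarily to the slots given by the $y_{i,k}$, run the two local-search stages, and invoke Lemmas~\ref{lem:ls_swaps}--\ref{lem:ls_term} and their second-stage counterparts for existence of swaps, the $O(n^3)$ swap bound, and the $O(n^2)$ cost per swap. Your explicit argument that a second-stage swap raises the receiving machine (load $<\ell_i$) by less than the class width $\epsilon\cdot p_{\max}$, hence never above $u_i+\epsilon\cdot p_{\max}$, is in fact a more careful justification of the paper's brief remark that stage two cannot create upper-bound violations, and your sketched contradiction with constraint~\eqref{eq:lower-bound-load} for the second-stage swap-existence is exactly the paper's appendix argument.
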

By combining Lemmas~\ref{lem:solve-ILP-faster} and \ref{lem:rounding-relaxation}
we obtain our main theorem. Our running time is polynomial if $K=O(1)$
which will be the relevant case for our applications for makespan
minimization, Santa Claus on identical machines, etc.
\begin{thm}
\label{thm:main-thm}There is an algorithm for the target load balancing
problem with a running time of $m^{K+1}n^{O(1/\epsilon)}$
that computes a solution in which the load of each machine $i\in\machines$
is in $[\ell_{i}-\epsilon\cdot p_{\max},u_{i}+\epsilon\cdot p_{\max}]$,
or asserts that there is no feasible solution.
\end{thm}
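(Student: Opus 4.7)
The plan is to obtain the theorem as a direct composition of Lemma~\ref{lem:solve-ILP-faster} and Lemma~\ref{lem:rounding-relaxation}, with a careful choice of the parameter $\delta$ so that the two sources of error add up to at most $\epsilon\cdot p_{\max}$. First I would observe that the slot-MILP is a valid relaxation of the target load balancing problem: given any integral assignment of jobs to machines satisfying $[\ell_i,u_i]$ for every $i$, setting $x_{i,j}\in\{0,1\}$ to be the indicator of that assignment and $y_{i,k}=|\{j\in\jobs_k: j\text{ on }i\}|$ produces a feasible solution to the slot-MILP. Consequently, if the slot-MILP is infeasible, then so is the original instance, and the algorithm is allowed to assert that no feasible solution exists.

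The concrete steps are as follows. Set $\delta:=\epsilon/2$ and $\epsilon':=\epsilon/2$. Run the algorithm of Lemma~\ref{lem:solve-ILP-faster} with parameter $\delta$. If it asserts infeasibility of the slot-MILP, output ``no feasible solution'' and stop, which is correct by the relaxation argument above. Otherwise we obtain a fractional solution $(x,y)$ to the slot-MILP$'$, i.e.~one satisfying the $y$-integrality and job-assignment constraints together with the widened load constraints $\ell_i-\delta\cdot p_{\max}\le\sum_{j}p_jx_{i,j}\le u_i+\delta\cdot p_{\max}$ for every $i\in\machines$. I would then feed $(x,y)$ into the local-search rounding of Lemma~\ref{lem:rounding-relaxation}, using the slot multiplicities given by the integer vector $y$, and run it with error parameter $\epsilon'$. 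By Lemma~\ref{lem:rounding-relaxation}, this returns an integral assignment that, with respect to the widened targets, lies in $[(\ell_i-\delta\cdot p_{\max})-\epsilon'\cdot p_{\max},\,(u_i+\delta\cdot p_{\max})+\epsilon'\cdot p_{\max}]$ on every machine, i.e.~within $[\ell_i-\epsilon\cdot p_{\max},u_i+\epsilon\cdot p_{\max}]$ as required.

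For the running time, Lemma~\ref{lem:solve-ILP-faster} applied with $\delta=\epsilon/2$ costs $m^{K+1}(n/(\delta\epsilon))^{O(1/\epsilon)}=m^{K+1}n^{O(1/\epsilon)}$, since powers of $1/\epsilon$ can be absorbed into the exponent. The local search adds only a $n^{O(1)}$ factor by Lemma~\ref{lem:rounding-relaxation}. Combining the two steps yields an overall bound of $m^{K+1}n^{O(1/\epsilon)}$, matching the statement.

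I do not expect any real obstacle in carrying out this plan: the two lemmas are designed to interlock (one produces a fractional solution to slot-MILP$'$, the other rounds any solution of that form), and the only point that warrants some care is the bookkeeping of the additive errors and the verification that slot-MILP is a relaxation of the integral problem (so that the infeasibility assertion of Lemma~\ref{lem:solve-ILP-faster} propagates to the target load balancing instance). Once $\delta$ and $\epsilon'$ are each chosen as $\epsilon/2$, both the approximation guarantee and the claimed running time follow immediately.
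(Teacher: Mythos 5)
Your proposal is correct and takes essentially the same route as the paper, which obtains the theorem precisely by composing Lemma~\ref{lem:solve-ILP-faster} with the local-search rounding of Lemma~\ref{lem:rounding-relaxation}. Your explicit choice $\delta=\epsilon'=\epsilon/2$ (running the whole slot-MILP pipeline with classification parameter $\epsilon/2$) and the observation that the slot-MILP relaxes the integral problem merely make the error bookkeeping and the infeasibility assertion more explicit than the paper's one-line combination.
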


\section{Applications}

We can use Theorem~\ref{thm:main-thm} in order to obtain additive
approximation schemes for \emph{makespan minimization}, the \emph{Santa Claus problem} and the \emph{envy-minimizing Santa Claus} problem on identical machines. The idea is to guess the target load intervals up to multiples of $\epsilon \cdot p_{\max}$ and then applying the algorithm due to Theorem~\ref{thm:main-thm}. As in these problems each machine has the same target load interval we have $K=1$. 

For $P||C_{\max}$ and the \emph{Santa Claus} problem the guessing procedure can be done in $O(1/\epsilon)$. As every machine is assigned the same target load interval,i.e. $\ell_i = \ell$ and $u_i = u$, we only need to guess the upper bound for $P||C_{\max}$ and lower bound for the \emph{Santa Claus} problem. For $P||C_{\max}$ we set $\ell = 0$ and guess $u$ as a multiple of $\epsilon \cdot p_{\max}$ within the interval $[\frac{1}{m}\sum_{j=1}^np_j,\frac{1}{m}\sum_{j=1}^np_j+p_{\max}]$. For the \emph{Santa Claus} problem we set $u = \sum_{j=1}^n p_j$ and guess $\ell$ within the interval $[\frac{1}{m}\sum_{j=1}^np_j-p_{\max},\frac{1}{m}\sum_{j=1}^np_j]$.

\begin{cor}
There is an algorithm for $P||C_{\max}$ with a running time of $m^2n^{O\left(1/\epsilon\right)}$
that computes a solution with makespan at most $OPT+\epsilon\cdot p_{\max}$.
\end{cor}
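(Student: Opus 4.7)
The plan is to reduce $P||C_{\max}$ to the target load balancing problem by enumerating candidate makespan thresholds and invoking Theorem~\ref{thm:main-thm}. Since all machines are identical, giving each the same target interval $[0,u]$ yields $K=1$, so a single call to Theorem~\ref{thm:main-thm} runs in $m^{2}n^{O(1/\epsilon)}$ time.

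First I would pin down the range in which $\OPT$ must lie, in order to keep the number of guesses small. The average load $L := \tfrac{1}{m}\sum_{j\in\jobs} p_j$ is a lower bound on $\OPT$, and a simple list-scheduling argument (or just placing jobs one at a time on the currently least loaded machine) shows $\OPT \le L + p_{\max}$. Hence $\OPT \in [L,\, L + p_{\max}]$, an interval of length $p_{\max}$.

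Next, set $\epsilon' := \epsilon/2$ and enumerate candidate upper bounds
\[
u_t \;:=\; L + t\cdot \epsilon' p_{\max}, \qquad t = 0,1,\ldots, \lceil 1/\epsilon'\rceil + 1,
\]
which gives $O(1/\epsilon)$ guesses. For each $u_t$, I would invoke Theorem~\ref{thm:main-thm} on the target load balancing instance with $\ell_i = 0$ and $u_i = u_t$ for every $i\in\machines$, using accuracy parameter $\epsilon'$. Let $t^{*}$ be the smallest index for which the algorithm returns a feasible solution rather than asserting infeasibility, and output that solution.

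For correctness, let $t^\circ$ be the smallest $t$ with $u_t \ge \OPT$; by construction $u_{t^\circ} \le \OPT + \epsilon' p_{\max}$. The optimal schedule itself is a feasible solution to the target load balancing instance for $u = u_{t^\circ}$, so Theorem~\ref{thm:main-thm} will not assert infeasibility for this guess, giving $u_{t^{*}} \le u_{t^\circ} \le \OPT + \epsilon' p_{\max}$. The solution returned for $u_{t^{*}}$ has load at most $u_{t^{*}} + \epsilon' p_{\max} \le \OPT + 2\epsilon' p_{\max} = \OPT + \epsilon\cdot p_{\max}$ on every machine, which is exactly the claimed makespan guarantee. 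The overall running time is $O(1/\epsilon)$ guesses times $m^{2}n^{O(1/\epsilon)}$ per guess, which is $m^{2}n^{O(1/\epsilon)}$. There is no real obstacle here; the only care needed is the double counting of additive error (from the guessing grid and from Theorem~\ref{thm:main-thm}), which is absorbed by the factor-of-two rescaling of $\epsilon$.
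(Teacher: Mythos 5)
Your proposal is correct and follows essentially the same route as the paper: set $\ell_i=0$ for all machines (so $K=1$), guess the common upper bound $u$ on a grid of step $\Theta(\epsilon\, p_{\max})$ inside $[\frac{1}{m}\sum_j p_j,\ \frac{1}{m}\sum_j p_j + p_{\max}]$, and invoke Theorem~\ref{thm:main-thm} for each of the $O(1/\epsilon)$ guesses. Your explicit rescaling to $\epsilon'=\epsilon/2$ to absorb the grid error on top of the theorem's $\epsilon\cdot p_{\max}$ violation is a slightly more careful bookkeeping of the same argument and is harmlessly absorbed into the $n^{O(1/\epsilon)}$ running time.
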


\begin{cor}
There is an algorithm for the Santa Claus problem on identical machines
with a running time of $m^2n^{O\left(1/\epsilon\right)}$ that computes
a solution in which each machine has a load of at least $OPT-\epsilon\cdot p_{\max}$.
\end{cor}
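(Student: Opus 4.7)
The plan is to reduce the Santa Claus problem to the target load balancing problem and invoke Theorem~\ref{thm:main-thm}. Since all machines are identical, we set $\ell_i = \ell$ and $u_i = u := \sum_j p_j$ for every $i \in \machines$, where the upper bound is chosen large enough that constraint~\eqref{eq:upper-bound-load} is trivially satisfied by any assignment. This gives $K = 1$, and the task reduces to finding the largest $\ell$ for which the algorithm of Theorem~\ref{thm:main-thm} does not assert infeasibility.

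To bound the search range for $\ell$, I would first establish that
\[
\frac{1}{m}\sum_{j=1}^n p_j - p_{\max} \;\le\; \OPT \;\le\; \frac{1}{m}\sum_{j=1}^n p_j.
\]
The upper bound is immediate: the minimum load is at most the average load. For the lower bound, consider the schedule produced by list scheduling (assign each job to the currently least loaded machine). Let $i^\ast$ be the machine with maximum final load $L_{\max}$ and let $j^\ast$ be the last job assigned to $i^\ast$. At the moment $j^\ast$ was placed, $i^\ast$ was least loaded with load $L_{\max} - p_{j^\ast}$, so every other machine had load at least $L_{\max} - p_{j^\ast} \ge L_{\max} - p_{\max}$, and subsequent assignments can only increase these loads. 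Hence $L_{\min} \ge L_{\max} - p_{\max} \ge \frac{1}{m}\sum_j p_j - p_{\max}$, and since $\OPT \ge L_{\min}$ for this feasible schedule, the lower bound follows.

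Next, I would enumerate candidate values of $\ell$ as multiples of $\epsilon \cdot p_{\max}$ inside $[\frac{1}{m}\sum_j p_j - p_{\max},\,\frac{1}{m}\sum_j p_j]$. Since this interval has width $p_{\max}$, there are $O(1/\epsilon)$ such candidates. For each candidate $\ell$, I would invoke Theorem~\ref{thm:main-thm} (with error parameter $\epsilon/2$, say) on the corresponding target load balancing instance, and take the largest $\ell$ for which the algorithm returns a schedule. Let $\ell^\ast$ denote the largest multiple of $\epsilon \cdot p_{\max}$ with $\ell^\ast \le \OPT$; then $\ell^\ast \ge \OPT - \epsilon \cdot p_{\max}$ and the optimal Santa Claus schedule certifies feasibility of the target load instance at $\ell = \ell^\ast$. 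Therefore Theorem~\ref{thm:main-thm} outputs a schedule in which every machine has load at least $\ell^\ast - \epsilon \cdot p_{\max} \ge \OPT - 2\epsilon \cdot p_{\max}$, and rescaling $\epsilon$ by a factor of $2$ gives the claimed $\OPT - \epsilon \cdot p_{\max}$ guarantee.

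For the running time, each call to Theorem~\ref{thm:main-thm} costs $m^{K+1} n^{O(1/\epsilon)} = m^2 n^{O(1/\epsilon)}$ because $K = 1$, and there are only $O(1/\epsilon)$ calls, giving a total of $m^2 n^{O(1/\epsilon)}$ as required. The only nontrivial step is the lower bound $\OPT \ge \frac{1}{m}\sum_j p_j - p_{\max}$, which controls the size of the search range and hence ensures both polynomial running time and that the correct guess is actually tried; everything else is a direct reduction to Theorem~\ref{thm:main-thm}.
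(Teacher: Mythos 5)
Your proposal is correct and follows essentially the same route as the paper: set $u=\sum_j p_j$, guess the common lower bound $\ell$ in multiples of $\epsilon\cdot p_{\max}$ within $[\frac{1}{m}\sum_j p_j - p_{\max},\frac{1}{m}\sum_j p_j]$, and invoke Theorem~\ref{thm:main-thm} with $K=1$, giving $O(1/\epsilon)$ calls of cost $m^2 n^{O(1/\epsilon)}$ each. Your explicit list-scheduling justification that $\OPT$ lies in that interval, and the constant-factor rescaling of $\epsilon$ to absorb the discretization loss, are exactly the details the paper leaves implicit.
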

For the \emph{envy-minimizing} Santa Claus problem we need to guess both $\ell$ and $u$ simultaneously from the same intervals above. This gives a total number of guesses of $O(1/\epsilon^2)$.
\begin{cor}
There is an algorithm for envy-minimization on identical machines
with a running time of $m^2n^{O\left(1/\epsilon\right)}$ which computes
a solution with envy at most $OPT+\epsilon\cdot p_{\max}$.
\end{cor}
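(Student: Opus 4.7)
The plan is to reduce envy-minimization on identical machines to a bounded enumeration of target load balancing instances and apply Theorem~\ref{thm:main-thm}. The key structural observation is that in every schedule the average load equals $\bar{p} := \tfrac{1}{m}\sum_{j \in \jobs} p_j$, so the minimum load in any feasible solution lies in $[\bar{p} - p_{\max}, \bar{p}]$ and the maximum load lies in $[\bar{p}, \bar{p} + p_{\max}]$. Thus, for the envy objective it suffices to restrict attention to target intervals $[\ell,u]$ with $\ell$ drawn from the interval used by the Santa Claus corollary and $u$ from the interval used by the $P\|C_{\max}$ corollary.

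First I would enumerate all pairs $(\ell,u)$ where $\ell$ is a multiple of $\epsilon\cdot p_{\max}$ in $[\bar{p} - p_{\max}, \bar{p}]$ and $u$ is a multiple of $\epsilon\cdot p_{\max}$ in $[\bar{p}, \bar{p} + p_{\max}]$; there are $O(1/\epsilon^2)$ such pairs. For each pair I set every machine's target interval to $[\ell,u]$, giving $K=1$, and invoke the algorithm of Theorem~\ref{thm:main-thm}. Among all guesses for which the algorithm returns a schedule (rather than asserting infeasibility), I output the one minimizing $u-\ell$.

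For correctness, let $L^{\star}$ and $\ell^{\star}$ denote the maximum and minimum machine loads in an optimal envy solution, so $\OPT = L^{\star} - \ell^{\star}$. Let $u^{\circ}$ be the smallest multiple of $\epsilon\cdot p_{\max}$ with $u^{\circ} \geq L^{\star}$ and let $\ell^{\circ}$ be the largest multiple of $\epsilon\cdot p_{\max}$ with $\ell^{\circ} \leq \ell^{\star}$. Both values lie in the enumerated ranges, and the optimal envy schedule is feasible for the target interval $[\ell^{\circ}, u^{\circ}]$, so Theorem~\ref{thm:main-thm} does not assert infeasibility for this guess; instead, it returns a schedule in which every load lies in $[\ell^{\circ} - \epsilon\cdot p_{\max}, u^{\circ} + \epsilon\cdot p_{\max}]$. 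Since $u^{\circ} - \ell^{\circ} \leq \OPT + 2\epsilon\cdot p_{\max}$, the envy of this schedule is at most $\OPT + 4\epsilon\cdot p_{\max}$, and the stated bound $\OPT + \epsilon\cdot p_{\max}$ follows by rescaling $\epsilon$ by a constant factor.

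For the running time, the $O(1/\epsilon^2)$ outer enumeration multiplies the $m^{2}n^{O(1/\epsilon)}$ cost of each call to Theorem~\ref{thm:main-thm} (using $K=1$); the $1/\epsilon^{2}$ factor is absorbed into the $n^{O(1/\epsilon)}$ term, yielding the claimed $m^{2}n^{O(1/\epsilon)}$ bound. I do not foresee a real obstacle: the argument is a routine composition of a two-parameter guessing step with Theorem~\ref{thm:main-thm}, parallel in spirit to the makespan and Santa Claus corollaries but simultaneously committing to both endpoints of the target interval. The only mildly delicate step is verifying that the selected guess among the successful ones certainly attains the analyzed envy bound, which holds because the guess $(\ell^{\circ}, u^{\circ})$ above is itself successful and the algorithm picks the minimizer of $u-\ell$ over all successful guesses.
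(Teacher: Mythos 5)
Your proposal is correct and matches the paper's (very terse) proof: guess both endpoints $\ell$ and $u$ as multiples of $\epsilon\cdot p_{\max}$ from the same two intervals used for the Santa Claus and makespan corollaries, invoke Theorem~\ref{thm:main-thm} with $K=1$ for each of the $O(1/\epsilon^2)$ pairs, and rescale $\epsilon$ to absorb the constant-factor loss. One small slip: the minimum (maximum) load is not in $[\bar p - p_{\max},\bar p]$ (resp.\ $[\bar p,\bar p + p_{\max}]$) for \emph{every} schedule, but it is for an optimal envy schedule because greedy list scheduling shows $\OPT\le p_{\max}$, which is all your argument needs.
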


\section{Conclusion}

In this paper we introduced the concept of additive approximation schemes, where the goal is
to design algorithms that find solutions where the absolute deviation of the objective value from the optimal value is at most $\epsilon\cdot h$
for a (natural) parameter $h$
of the respective problem. Additive approximation schemes are interesting in among others the following situations. First, when the underlying problem does not admit a (traditional) PTAS, or even any (multiplicative) approximation guarantee, e.g., in the case that the optimal value is $0$, additive approximation schemes can give an alternative notion for approximating the problem. Secondly, in case that $h$ provides a lower bound on the optimum, then an additive approximation scheme immediately implies a (traditional) approximation scheme. However, when $h \ll \OPT$, then the additive approximation scheme can be much stronger.
We applied this concept to load balancing problems on identical machines finding an additive PTAS for makespan minimization, the Santa Claus problem and the envy-minimizing Santa Claus problem. We do so by introducing a new relaxation, the slot-MILP, and showing how to solve and round this relaxation. The running time of our method is exponential in both $1/\epsilon$ and the number of different target load intervals $K$. 

Therefore, we leave two open questions with respect to additive approximation schemes for the load balancing problems considered here. Firstly, for $P||C_{\max}$ there is an EPTAS~\cite{AlonEtal2009,JansenEtal2020}, i.e., an algorithm with
a running time of the form $f(1/\epsilon)\cdot n^{O(1)}$ for some
function $f$. We leave as an open question to find an additive approximation
scheme for the problem with such a running time or rule out that one
exists. Secondly, we leave open to find an additive approximation scheme for the
target load balancing problem when the number $K$ of different machine
types is super-constant. Note that our rounding algorithm from Section~\ref{sec:rounding-relaxation}
works for arbitrary $K$, but it is not clear whether one can solve the
slot-MILP in this case (approximatively) in polynomial time.  When the number of machines is not part of the input, then $Pm||C_{\max}$ admits an FPTAS, i.e., an approximation scheme with running time polynomial in the input size and $1/\epsilon$~\cite{Sahni1976}. Using the ideas in this paper, it can be easily shown that all three versions with a constant number of machines admit an additive FPTAS. On the other hand, when the number of machines is part of the input, then using similar arguments as in~\cite{GareyJohnson1978}, we can show that there cannot exist an additive FPTAS unless $\text{P} = \text{NP}$, as these problems are strongly NP-hard. For the case of unrelated machines, that is when the processing times, now denoted by $p_{ij}$, depend on the machine as well as the job, it can be shown that, unless $\text{P} = \text{NP}$, for each of the objectives considered in this paper, there does not exist an additive approximation algorithm with guarantee less then $\frac{1}{3} p_{\max}$, using the reduction from~\cite{LenstraEtal1990}. Therefore, the existence of an additive PTAS for this problem is also ruled out under the assumption that $\text{P} \neq \text{NP}$.

Another interesting direction of future research is to study the concept of additive approximation schemes for other types of problems.

\section*{Acknowledgements}

We wish to thank Jos\'e Verschae, Alexandra Lassota and Klaus Jansen for helpful discussions
on this problem. 
\bibliographystyle{plain}
\bibliography{main}
\appendix
\section{Detailed proofs of Section~\ref{sec:relax}}
\label{apx:relax_details}
\subsection{Proof of Lemma~\ref{lem:structural-property}}\label{apx:pf_strucprop}
\begin{proof} Let $(x,y)$ be a solution to the slot-MILP and assume that $x$ is the solution which minimizes 
\begin{equation}
\sum_{i\in\machines}\sum_{k=1}^{1/\epsilon}\lVert y_{i,k}\rVert_{2}.\label{potential}
\end{equation}
Now suppose toward contradiction that there are $i_{1}$, $i_{2}$
with $y_{i_{1}}\equiv y_{i_{2}}\bmod2$, but $y_{i_{1}}\neq y_{i_{2}}$.
We construct a new solution $x'$, which has a lower value of (\ref{potential}).
We set $x'_{i,j}=x_{i,j}$ for all $i\notin\{i_{1},i_{2}\}$ and $x'_{i_{1},j}=x'_{i_{2},j}=(x_{i_{1},j}+x_{i_{2},j})/2$.
In other words, we evenly distribute all jobs between $i_{1}$ and
$i_{2}$. Let us first check that the solution remains feasible. Let
$j\in\jobs$. Then 
\begin{align*}
\sum_{i\in\machines}x'_{i,j} & =x'_{i_{1},j}+x'_{i_{2},j}+\sum_{i\notin\{i_{1},i_{2}\}}x'_{i,j}\\
 & =\frac{x_{i_{1},j}+x_{i_{2},j}}{2}+\frac{x_{i_{1},j}+x_{i_{2},j}}{2}+\sum_{i\notin\{i_{1},i_{2}\}}x_{i,j}\\
 & =\sum_{i\in\machines}x_{i,j}=1.
\end{align*}
For all machines $i\notin\{i_{1},i_{2}\}$ the load does not change and, hence, the load of machine $i$ remains within $[\ell_i,u_i]$. For $i_{1}$ and $i_{2}$, we argue
\begin{align*}
\sum_{j\in\jobs}p_{j}x'_{i_{1},j}=\sum_{j\in\jobs}p_{j}x'_{i_{2},j} & =\sum_{j\in\jobs}p_{j}\frac{x_{i_{1},j}+x_{i_{2},j}}{2}\\
 & =\frac{1}{2}\sum_{j\in\jobs}p_{j}x_{i_{1},j}+\frac{1}{2}\sum_{j\in\jobs}p_{j}x_{i_{2},j}\\
 & \le\frac{u_i}{2}+\frac{u_i}{2}=u_i
\end{align*}
and 
\begin{align*}
\sum_{j\in\jobs}p_{j}x'_{i_{1},j}=\sum_{j\in\jobs}p_{j}x'_{i_{2},j} & =\sum_{j\in\jobs}p_{j}\frac{x_{i_{1},j}+x_{i_{2},j}}{2}\\
 & =\frac{1}{2}\sum_{j\in\jobs}p_{j}x_{i_{1},j}+\frac{1}{2}\sum_{j\in\jobs}p_{j}x_{i_{2},j}\\
 & \ge\frac{\ell_i}{2}+\frac{\ell_i}{2}=\ell_i.
\end{align*}
Hence, the solution remains optimal. As for the integrality constraints,
again the machines $i\notin\{i_{1},i_{2}\}$ do not change. Let $k\in\{1,\dotsc,1/\epsilon\}$.
Since $y_{i_{1},k}\equiv y_{i_{2},k}$, we have that $y_{i_{1},k}+y_{i_{2},k}$
is even. It follows that 
\[
\sum_{j\in\jobs_{k}}x'_{i_{1},j}=\sum_{j\in\jobs_{k}}x'_{i_{2},j}=\frac{y_{i_{1}}+y_{i_{2}}}{2}
\]
is integral. Now it remains to show that (\ref{potential}) has decreased.
Notice that by triangle inequality 
\[
\lVert y'_{i_{1},k}\rVert_{2}+\lVert y'_{i_{2},k}\rVert_{2}=2\left\lVert \frac{y_{i_{1},k}+y_{i_{2},k}}{2}\right\rVert _{2}\le\lVert y_{i_{1},k}\rVert_{2}+\lVert y_{i_{2},k}\rVert_{2}
\]
and strict inequality holds when $y_{i_{1},k}\neq y_{i_{2},k}$. Since
this is the case for at least one $k$ and all machines $i\notin\{i_{1},i_{2}\}$
do not change, we have that (\ref{potential}) has decreased. A contradiction.
\end{proof}

\subsection{Details on Dynamic Program of Section~\ref{sec:dp}}
\label{apx:dp_details}
Here we present detailed proofs of Lemmas~\ref{lem:structural-property-2}-\ref{lem:solve-ILP-faster} and a detailed description of the dynamic program.

\begin{proof}[Proof of Lemma~\ref{lem:structural-property-2}]

Condition~\eqref{eq:min-jobs-bound} and~\eqref{eq:volume-assigned} follow directly from feasibility of the solution.

To show condition~\eqref{eq:monotonicity}, let $x,y$ be a solution with corresponding average load vector $\left\{ z_{i,k}\right\} _{i\in\machines,k\in1,\dotsc,1/\epsilon\}}$, where the values $z_{i,k}$ when $y_{i,k} = 0$ are chosen appropriately. Let $\hat{z}_1 \le \cdots \le \hat{z}_{\bar{n}}$ be an ordering of the $\bar{n} = | \{ (i,k) : y_{ik} > 0 \} |$
values $z_{i,k}$ for all 
$i\in\machines,k\in\{1,\dotsc,1/\epsilon\}$ with $y_{ik} > 0$. Assume that $(x,y)$ is the solution maximizing the following potential function
\begin{equation}
\sum_{i=1}^{\bar{n}} n^{2(m/\epsilon-i)}\hat{z}_i \label{eq:pot}
\end{equation}
We will now show that in this case we can iteratively find an ordering of machines such that condition~\eqref{eq:monotonicity} holds and otherwise get a contradiction with respect to the potential function. Let $i$ be the machine minimizing $\sum_{k=1}^{1/\epsilon} z_{i,k}$. All other machines $i'$ must satisfy one of the following two cases: (1) $z_{i,k} \leq z_{i',k}$ for all $k \in \{1,...,1/\epsilon\}$ or (2) $z_{i,k} \andy{>} z_{i',k}$ for some $k$. If (1) holds for all machines $i'$ we relabel machine $i$ as machine $1$. Otherwise, let $i' \neq i$ be a machine such that for some $k$ 
\begin{equation}
z_{i,k}>z_{i',k}\label{order-k}
\end{equation}
Then, as $i$ minimizes $\sum_{k=1}^{1/\epsilon} z_{i,k}$ we know that there must exist $\overline{k} \neq k$ with 
\begin{equation}
 z_{i,\overline{k}}<z_{i',\overline{k}}\label{order-overlinek}
\end{equation}
As we can freely choose the value of $z_{\bar{i},k'}$, whenever $y_{\bar{i},k'} = 0$, we know that $y_{i,k}, y_{i,\overline{k}}, y_{i',k}, y_{i',\overline{k}} > 0$.
We now gradually exchange jobs of $\jobs_{k}$ and $\jobs_{\overline{k}}$
between $i$ and $i'$ without changing the total load on either of
the machines. Indeed, there must be some $j,j'\in\jobs_{k}$ with
$x_{i,j}>0$, $x_{i',j'}>0$, and $p_{j}>p_{j'}$. Conversely, there
are $\overline{j},\overline{j}'\in\jobs_{\overline{k}}$ with $x_{i,\overline{j}}>0$,
$x_{i',\overline{j}'}>0$, and $p_{\overline{j}}<p_{\overline{j}'}$.
For some $\delta, \overline{\delta}>0$ we now augment the solution
in the following way. 
\begin{align*}
x_{i,j'} & \gets x_{i,j'}+\delta & x_{i,\overline{j}'} & \gets x_{i,\overline{j}'}+\overline{\delta}\\
x_{i,j} & \gets x_{i,j}-\delta & x_{i,\overline{j}} & \gets x_{i,\overline{j}}-\overline{\delta}\\
x_{i',j'} & \gets x_{i',j'}-\delta & x_{i',\overline{j}'} & \gets x_{i',\overline{j}'}-\overline{\delta}\\
x_{i',j} & \gets x_{i',j}+\delta & x_{i',\overline{j}} & \gets x_{i',\overline{j}}+\overline{\delta}
\end{align*}
It is easy to see that for $\delta$ and $\overline{\delta}$ sufficiently
small each variable remains non-negative. Moreover, each job remains
fully assigned and the number of jobs of $\jobs_{k}$ and $\jobs_{\overline{k}}$
assigned to $i$ and $i'$ remains the same. 
By setting $\overline{\delta}=\delta(p_{j}-p_{j'})/(p_{\overline{j}'}-p_{\overline{j}})$ the load over each of the two machines stays the same. Furthermore, as $p_j > p_{j'}$ and $p_{\overline{j}'}>p_{\overline{j}}$ we have that $\delta, \overline{\delta}>0$. We choose $\delta$ maximal such that all
$x$ variables remain non-negative and the inequalities (\ref{order-k})
and (\ref{order-overlinek}) still hold or turn to equality. This means that we decreased $z_{i,k}$ by $\frac{\delta(p_j - p_j')}{y_{i,k}}$ and $z_{i',\overline{k}}$ by $\frac{\delta(p_{j}-p_{j'})}{y_{i,\overline{k}}}$. At the same time we increased $z_{i,\overline{k}}$ by $\frac{\delta(p_{j}-p_{j'})}{y_{i,\overline{k}}}$ and $z_{i',k}$ by $\frac{\delta(p_j - p_j')}{y_{i',k}}$. 
Since $z_{i',k}$ and $z_{i,\bar{k}}$ (the respective smaller $z$-variables for $i$ and $i'$ that we change) increase by at least $\frac{\delta(p_j - p_j')}{n}$ and $z_{i,k}$ and $z_{i',\bar{k}}$ decrease by at most $\delta(p_j - p_j')$,
we have that~\eqref{eq:pot} increases. This gives a contradiction.

As we can repeat this argument iteratively assuming that machines $\{1,\dots,i_0\}$ are correctly sorted for some $i_0 \in \{1,\dots,m\}$, we have that there exists a solution $(x,y)$ with vector $\left\{ z_{i,k}\right\} _{i\in\machines,k\in1,\dotsc,1/\epsilon\}}$ such that condition~\eqref{eq:monotonicity} holds.

\end{proof}


\begin{proof}[Proof of Lemma~\ref{lem:sufficient-conditions}]
We first show that there exists an assignment vector $\left\{ x_{i,j}\right\} _{i\in\machines,j\in\jobs}$ satisfying
\begin{equation}
\sum_{j\in\jobs_{k}}p_{j}x_{i,j}\le y_{i,k}z_{i,k}\label{property-2-desired}
\end{equation}
for all $i \in \machines$ and $k \in \{1,\dots,1/\epsilon\}$. In order to do so we use condition \eqref{eq:min-jobs-bound} of Lemma \ref{lem:structural-property-2}. We find this assignment independently for all $k$.
We start by assigning $\jobs_{k}^{\min}(y_{1,k})$ (completely) to
machine $1$, then $\jobs_{k}^{\min}(y_{1,k}+y_{2,k})\setminus\jobs_{k}^{\min}(y_{1,k})$
to machine $2$, etc. This assignment does not necessary have the
desired property (\ref{property-2-desired}). Hence, we repair the
property iteratively for $i=2,\dotsc,m$. Machine $1$ clearly satisfies
~\eqref{property-2-desired} because of~\eqref{eq:min-jobs-bound}). Let $i\in\{2,\dotsc,m-1\}$
such that all machines $1,\dotsc,i$ satisfy~\eqref{property-2-desired}.
In each iteration $i$ we do not touch any of the machines $i+1,\dotsc,m$.
Hence, when repairing machine $i$ we may assume that machines $1,\dotsc,i$
contain only $\jobs_{k}^{\min}(y_{1,k}+\cdots+y_{i,k})$. If machine
$i$ satisfies (\ref{property-2-desired}) we are done and continue
with $i+1$. Otherwise, we know that there is a job $j$ with $p_{j}>z_{i,k}$
and $x_{i,j}>0$. Moreover, because of condition \eqref{eq:min-jobs-bound} we have 
\begin{equation}
\sum_{i'=1}^{i}\sum_{j\in\jobs_{k}^{\min}(y_{1,k}+\cdots+y_{i,k})}p_{j}x_{i',j}=\sum_{j\in\jobs_{k}^{\min}(y_{1,k}+\cdots+y_{i,k})}p_{j}\le y_{1,k}z_{1,k}+\cdots+y_{i,k}z_{i,k}.
\end{equation}
Since $i$ violates (\ref{property-2-desired}) there must be some
$i'<i$ satisfying (\ref{property-2-desired}) with strict inequality.
In particular, there is a job $j'$ with $p_{j'}<z_{i',k}\le z_{i,k}<p_{j}$
and $x_{i',j'}>0$. We now choose an $\alpha > 0$ and exchange $j'$ and $j$ between $i$ and $i'$ as follows
\begin{align*}
x_{i,j'} & \gets x_{i,j'}+ \alpha & x_{i',j'} & \gets x_{i',j'} - \alpha\\
x_{i,j} & \gets x_{i,j'} -\alpha & x_{i',j'} & \gets x_{i',j'} + \alpha\\
\end{align*}
Clearly, the solution remains feasible. We choose $\alpha$ maximal such that either $i'$ satisfies (\ref{property-2-desired}) with equality, $i$ satisfies (\ref{property-2-desired}), $x_{i,j}=0$, or $x_{i',j'}=0$. The choice of $\alpha$ makes sure that each pair $j,j'$ that can be exchanged like this will only be exchanged once.  This procedure is repeated until $i$ satisfies (\ref{property-2-desired}). As the procedure is repeated for all $i$ and possibly has to check all pairs of every job type in each exchange we have a running time of $O(mn^2)$.

Next, we claim that for all $i$ and $k$, we have
\begin{equation}
    \sum_{j \in \jobs_k}p_jx_{i,j} \ge y_{i,k}z_{i,k} - \delta \epsilon\cdot p_{\max} \label{eq:machine_under}
\end{equation}
To prove this claim, assume by contradiction that for some machine $i'$~\eqref{eq:machine_under} does not hold. Then by~\eqref{property-2-desired} and condition~\eqref{eq:volume-assigned}, we have that
\begin{equation}
    \sum_{j \in \jobs_k} \sum_{i \in \machines} x_{i,j}p_j < \sum_{i \in \machines} y_{i,k}z_{i,k} - \delta \epsilon\cdot p_{\max} < \sum_{j \in \jobs_k}p_j
\end{equation}
This contradicts the fact that all jobs are fully assigned and thus $\sum_{j \in \jobs_k} p_j x_{i,j} = \sum_{j \in \jobs_k} p_j$. Hence, we have that 
\begin{equation}
    \sum_{k=1}^{1/\epsilon} \sum_{k \in \jobs_k} p_j x_{ij} \geq \sum_{k=1}^{1/\epsilon} y_{i,k} z_{i,k} - \delta p_{\max} \geq \ell_i - \delta p_{\max},
\end{equation}
where the last inequality follows by condition~\eqref{eq:bounds}.
\end{proof}

Based on Lemmas~\ref{lem:structural-property-2} and~\ref{lem:sufficient-conditions}, the goal of our DP is to compute vectors $\left\{ y_{i,k},z_{i,k}\right\} _{i\in\machines,k\in\{1,\dotsc,1/\epsilon\}}$
that satisfy the conditions due to Lemma~\ref{lem:sufficient-conditions}.
The key insight is now that when we consider the next machine $i'$
in the ordering, we do not need to remember all vectors $\left\{ y_{i,k},z_{i,k}\right\} _{i\in\machines,k\in\{1,\dotsc,1/\epsilon\}}$
for all previously considered machines~$i$, but it suffices to remember
the number of previously assigned jobs from each set $\jobs_{k}$,
the current left hand side of inequality~\eqref{eq:min-jobs-bound}
for each $k$, the vector $\left\{ z_{i'',k}\right\} _{k\in\{1,\dotsc,1/\epsilon\}}$
of the previously considered machine $i''$, and for each type of
machines the number of previously guessed machines of this type. We
say that two machines $i,i'\in\machines$ are of the same \emph{type
}if $[\ell_{i},u_{i}]=[\ell_{i'},u_{i'}]$. Recall that $K$ denotes
the number of different types of machines. Let $\K:=\left\{ [\ell_{i},u_{i}]|i\in\machines\right\} $
and define $\ell^{(1)},...,\ell^{(|\K|)},u^{(1)},...,u^{(|\K|)}$
such that $\K=\left\{ \left[\ell^{(1)},u^{(1)}\right],...,\left[\ell^{(|\K|)},u^{(|\K|)}\right]\right\} $.
For each $r\in\{1,...,K\}$ let $m_{r}$ denote the number of machines
$i\in\machines$ such that $[\ell_{i},u_{i}]=\left[\ell^{(r)},u^{(r)}\right]$.
We say that such a machine is of \emph{type} $r$.

We introduce a DP-table with one cell for each combination of
\begin{itemize}
\item a value $i$ indicating the number of machines that have already been considered with $i \in \{0,...,m\}$,
\item a value $m_{r}'\in\{0,...,m_{r}\}$ for each interval $r\in\{1,...,|\K|\}$
indicating the number of machines of type $r$ for which we have already
defined vectors $\left\{ y_{i',k},z_{i',k}\right\} _{k\in\{1,\dotsc,1/\epsilon\}}$,
let $\machines'$ denote these machines intuitively,
\item a vector $\left\{ z_{i,k}\right\} _{k\in\{1,\dotsc,1/\epsilon\}}$ where
$z_{i,k}\in\left\{ 0,\frac{\delta\epsilon}{n}p_{\max},\frac{2\delta\epsilon}{n}p_{\max},...,p_{\max}\right\} $
for each $k\in\{1,\dotsc,1/\epsilon\}$. The vector~$z_{i,k}$ corresponds to the average loads on the currently considered machine,
\item a vector $\left\{ y_{i,k}\right\} _{k\in\{1,\dotsc,1/\epsilon\}}$ where
$y_{i,k}\in\left\{ 0,1,2,...,n_k\right\} $
for each $k\in\{1,\dotsc,1/\epsilon\}$. The vector~$y_{i,k}$ corresponds to the number of jobs on the currently considered machine,
\item for each $k\in\{1,...,1/\epsilon\}$
\begin{itemize}
\item a value $n'_{k}\in\{0,...,|\jobs_{k}|\}$ indicating the number of
jobs of in $\jobs_{k}$ that were previously assigned to machines
in $\machines'$,
\item a value $S_{k}$ with $S_{k} \in\left\{ 0,\frac{\delta\epsilon}{n}p_{\max},\frac{2\delta\epsilon}{n}p_{\max},...,np_{\max}\right\}$ which corresponds to the value $\sum_{i'=1}^{i}y_{i',k}z_{i',k}$.
\end{itemize}
\end{itemize}
Each cell corresponds to the subproblem of checking whether there is a solution using machines $\{1,\dots,i\}$ such that each machine type $r \in \{1,\dots, K\}$ is used $m_r$ times, machine $i$ is assigned $y_{i,k}$ jobs of each type $k$ with average load $z_{i,k}$, for each type $k$ a total of $n'_k$ jobs is assigned and the total volume assigned of each type $k$ is $S_k$. Due to the dimension of the values corresponding to a DP-cell, the dimension of the DP table is given by $m^{K+1}\cdot (\frac{n}{\delta\epsilon})^{O(1/\epsilon)}$. 

When considering cell 
\[
\left(i,\left\{ m_{r}\right\} _{r\in\{1,...,|\K|\}},\left\{ z_{i,k}\right\} _{k\in\{1,\dotsc,1/\epsilon\}},\left\{ y_{i,k}\right\} _{k\in\{1,\dotsc,1/\epsilon\}},\left\{ n'_{k}\right\} _{k\in\{1,\dotsc,1/\epsilon\}},\left\{ S_{k}\right\} _{k\in\{1,\dotsc,1/\epsilon\}}\right)
\]
the DP proceeds as follows: for every machine type $r^* \in \{1,...,K\}$ it checks whether for some $\left\{ \tilde{z}_{i-1,k}\right\} _{k\in\{1,\dotsc,1/\epsilon\}}$ and $\left\{ \tilde{y}_{i-1,k}\right\} _{k\in\{1,\dotsc,1/\epsilon\}}$, the value of the following cell is true
\[
\left(i-1,\left\{ \tilde{m}_{r}\right\} _{r\in\{1,...,|\K|\}},\left\{ \tilde{z}_{i-1,k}\right\} _{k\in\{1,\dotsc,1/\epsilon\}},\left\{ \tilde{y}_{i-1,k}\right\} _{k\in\{1,\dotsc,1/\epsilon\}},\left\{ \tilde{n}'_{k}\right\} _{k\in\{1,\dotsc,1/\epsilon\}},\left\{ \tilde{S}_{k}\right\} _{k\in\{1,\dotsc,1/\epsilon\}}\right),
\]
where $\tilde{m}_{r}=m_{r}$ for each $r\ne r^{*}$ and $\tilde{m}_{r^{*}}=m_{r^{*}}-1$,
$\tilde{n}'_{k}=n'_{k}-y_{i,k}$ for each $k\in\{1,\dotsc,1/\epsilon\}$,
and $\tilde{S}_{k}=S_{k}-y_{i,k}z_{i,k}$ for each $k\in\{1,\dotsc,1/\epsilon\}$. Then we need to check if the following conditions are true for all $k \in \{1,...,1/\epsilon\}$
\begin{equation}
    S_{k}\le\delta\epsilon\cdot p_{\max}+\sum_{j\in\jobs_{k}}p_{j} \label{eq:dp_con1}
\end{equation}
\begin{equation}
    z_{i,k} \geq \tilde{z}_{i-1,k} \label{eq:dp_con2}
\end{equation}
If these conditions are true, then there exists a solution corresponding to the considered DP cell. Filling each cell takes $K \cdot (\frac{n}{\delta\epsilon})^{O(1/\epsilon)}$. 

Finally, for each possible value of $\left\{ z_{m,k}\right\} _{k\in\{1,\dotsc,1/\epsilon\}}$ and $\left\{ m_{m,k}\right\} _{k\in\{1,\dotsc,1/\epsilon\}}$ we check whether there exists a solution for the DP cell
\[
\left(m,\left\{ m_{r}\right\} _{r\in\{1,...,|\K|\}},\left\{ z_{m,k}\right\} _{k\in\{1,\dotsc,1/\epsilon\}},\left\{ y_{m,k}\right\} _{k\in\{1,\dotsc,1/\epsilon\}},\left\{ n'_{k}\right\} _{k\in\{1,\dotsc,1/\epsilon\}},\left\{ S_{k}\right\} _{k\in\{1,\dotsc,1/\epsilon\}}\right)
\]
where the correct number of each machine type is considered, all jobs are assigned and for every $k \in \{1,\dots,1/\epsilon\}$ we have that 
\[
\sum_{j \in \jobs_k} p_j \le S_k \le \sum_{j \in \jobs_k} p_j + \delta\epsilon\cdot p_{\max}.
\]
If this is the case we use standard backward recursion to find vectors $\left\{ z_{i,k}\right\} _{k\in\{1,\dotsc,1/\epsilon\}}$ and $\left\{ y_{i,k}\right\} _{k\in\{1,\dotsc,1/\epsilon\}}$ for all $i \in \{1,...,m\}$ and an assignment of machine types to machine indices and use Lemma~\ref{lem:sufficient-conditions} to obtain a solution $(x,y)$ to slot-MILP'. If there is no such solution we assert that there is no solution to the original relaxation, i.e., to slot-MILP.

\begin{proof}[Proof of Lemma~\ref{lem:solve-ILP-faster}]
The running time follows from the dimension of the DP table and the time it takes to validate a specific DP cell. This amounts to a running time of $m^{K+1}(\frac{n}{\delta\epsilon})^{O(1/\epsilon)}$. 

For the correctness of the DP we need two observations: (1) due to conditions~\eqref{eq:dp_con1} and~\eqref{eq:dp_con2} and the way we check whether a solution corresponding to a DP cell exists we have that there exists a solution for machine $i$ if and only if there is a solution for machine $i-1$. Hence, we can indeed find a solution via backward recursion and (2) if for some $\left\{ z_{m,k}\right\} _{k\in\{1,\dotsc,1/\epsilon\}}$ and $\left\{ y_{m,k}\right\} _{k\in\{1,\dotsc,1/\epsilon\}}$ there is a solution for the cell
\[
\left(m,\left\{ m_{r}\right\} _{r\in\{1,...,|\K|\}},\left\{ z_{m,k}\right\} _{k\in\{1,\dotsc,1/\epsilon\}},\left\{ y_{m,k}\right\} _{k\in\{1,\dotsc,1/\epsilon\}},\left\{ n'_{k}\right\} _{k\in\{1,\dotsc,1/\epsilon\}},\left\{ S_{k}\right\} _{k\in\{1,\dotsc,1/\epsilon\}}\right),
\]
then we can apply Lemma~\ref{lem:sufficient-conditions} to find a solution to slot-MILP'. If there is no such solution, we know that due to our rounding of the $z$-values there is also no solution satisfying Lemma~\ref{lem:structural-property-2}. This implies that there is no solution to the slot-MILP.
\end{proof}

\section{Details on the local search algorithm}
\label{apx:det_ls}
 We first show that Lemma~\ref{lem:ls_swaps} also holds for the second stage of the algorithm.
\begin{proof}[Proof of Lemma~\ref{lem:ls_swaps} for the second stage]
 We want to find a job $j$ of type $k$ assigned to a machine with load less than $\ell_i$ and swap it with some job $j'$ of the same type with $p_j' > p_j$ currently assigned to machine $i'$ with load at least $\ell_i'$. Suppose towards contradiction that no such pair of jobs exists. Then we know that the machines in $\machines_1 \cup \dots \cup \machines_\ell$ are assigned the largest jobs of type $k$ while having a load less than their specific lower bounds. This implies that even in a fractional solution the total load cannot be increased and, hence, at least one machine violates its lower target. This gives a contradiction.
 
 As every job is fully assigned to a machine we have that each job of a type $k$ which is assigned to a machine in the  sets $\machines_1,\machines_2,\dots,\machines_\ell$ is considered exactly once and compared to every other job of type $k$ not assigned to these machines. As the existence of a pair was shown for an arbitrary $k$ this gives a worst case running time of $O(n^2)$.
\end{proof}
Next, we present the proofs of Lemmas~\ref{lem:ls_swapdis} and~\ref{lem:ls_term}.
\begin{proof}[Proof of Lemma~\ref{lem:ls_swapdis}] 
We observe how the graph changes when swapping two jobs. Throughout the proof the distance $d(w)$
of a slot $w$ is the distance from $s$ to $w$ before the swap is
executed. Clearly, removing any edges from the graph cannot decrease
the distances of vertices. Edges between slots of the same machine
do not change and no new edges can be added from the source, since
during the execution of the algorithm a machine with load at most
$u_i+\epsilon\cdot p_{\max}$ will never exceed $u_i+\epsilon\cdot p_{\max}$. Hence,
it suffices to look at the changes in edges of weight $1$. Although
such an edge $(u,v)$ might be added to the graph, we will show that
this happens only when $d(v)\le d(u)+1$. Adding this edges cannot
decrease any distances, since the first part of a shortest path using
$(u,v)$ could always be replaced by a path to $v$ without this edge.

Now we have to check that these are the only changes made to the graph.
Let $u,v$ be the slots in which we exchange the jobs. The size of
the job in $u$ decreases; the size of the job in $v$ increases.
We only need to look at the incoming and outgoing edges of $u$ and
$v$, since all other edges remain the same.

Consider the incoming edges of $u$. Since the size of the job in
$u$ decreases, there could be new incoming edges. Let $(w,u)$ be
an edge of weight $1$ that is added. This means the job in slot $w$
has a larger size than the job on $u$. Either $w$ is a slot on the
same machine as $v$ or $(w,v)$ is in the graph before the swap.
The former case implies that $d(w)=d(v)=d(u)+1$. In the latter case
we have $d(u)=d(v)-1\le d(w)$. No outgoing edge from $u$ can be
added, since the size of $u$'s job decreases.

Now consider $v$. Since the size of its job increases, no incoming
edge can be added. As for the outgoing edges, let $(v,w)$ be an outgoing
edge added by the swap. Then either $u$ and $w$ are slots on the
same machine or $(u,w)$ was is in the graph before the swap. In the
former case, $d(w)=d(u)=d(v)-1$. In the latter case, $d(w)\le d(u)+1=d(v)$. 
\end{proof} 

\begin{proof}[Proof of Lemma~\ref{lem:ls_term}]
Let $j_{1},\dotsc,j_{n}$ be the jobs $\jobs$ in increasing order of size. We claim that the potential
\[
\sum_{i=1}^{n}i\cdot d(j_{i})
\]
increases with every swap. Here $d(j_{i})$ denotes the distance from
$s$ to the slot to which $j_{i}$ is assigned. Since the function
is integral and bounded by $n^{3}$, the claim follows. Let $j_{k}$,
$j_{h}$ be the jobs that are swapped. Assume that $k<h$, i.e.,
$p_{j_{k}}<p_{j_{h}}$. Let $d$, $d'$ be the distance functions
before and after the swap. Based on Lemma~\ref{lem:ls_swapdis} we have $d'(j_{i})\ge d(j_{i})$ for all $i\notin\{k,h\}$, $d'(j_{h})\ge d(j_{k})$
and $d'(j_{k})\ge d(j_{h})$, since these jobs swapped their slots.
It follows that 
\begin{equation*}
    \begin{split}
    k\cdot d'(j_{k})+h\cdot d'(j_{h}) & \ge k\cdot d(j_{h})+h\cdot d(j_{k})\\
    & =k\cdot d(j_{h})+(h-k)\cdot\underbrace{d(j_{k})}_{>d(j_{h})}+k\cdot d(j_{k})>h\cdot d(j_{h})+k\cdot d(j_{k}).
    \end{split}
\end{equation*}
\end{proof}
For the second stage, the analogues of Lemmas~\ref{lem:ls_swapdis} and~\ref{lem:ls_term} follow from similar arguments. We again construct a graph on $|\jobs| + 1$ vertices. The difference in the construction is that there is an edge from $s$ to every slot on a machine with load less than $\ell_i-\epsilon\cdot p_{\max}$ and there is an edge of weight $1$ from slot $u$ to $v$, when (1) $u$ and $v$ are not on the same machine, (2) $u$ and $v$ belong to the same size class, and (3) $u$ is currently assigned a smaller job than $v$. Again, the breadth-first search starts at $s$ and once a machine with load at least $\ell_i$ is reached the algorithm selects the jobs assigned to the vertices $u$ and $v$ corresponding to the edge $(u,v)$ over which the machine was reached and swaps them. This procedure is continued until for every machine $i$ the load is at least $\ell_i - \epsilon\cdot p_{\max}$. And due to Lemma~\ref{lem:ls_swaps} each iteration finishes in $O(n^2)$.

\begin{lem}\label{lem:ls_swapdis_sec}
The distance from $s$ to any slot does not decrease by a swap in the second stage of the algorithm. 
\end{lem}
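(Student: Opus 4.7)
The plan is to mirror the proof of Lemma~\ref{lem:ls_swapdis} while accounting for the reversed orientation of the weight-$1$ edges in the second-stage graph and for the fact that in a second-stage swap the job on $u$ grows while the job on $v$ shrinks. Throughout I keep the convention that $d(\cdot)$ denotes the distance from $s$ in the graph before the swap, and I aim to show that every vertex keeps at least this distance after the swap.

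First I will classify the possible changes to the graph into three families: (i) edges of weight $0$ inside a machine, (ii) edges from the source $s$, and (iii) weight-$1$ edges between slots on different machines. The edges in (i) are unaffected by a swap since the two machines keep exactly the same sets of slots. For (ii), I need to argue that no new source edge appears. The machine of $u$ has its load strictly increased, so it cannot newly enter $\machines_1$. The machine of $v$ has its load decreased, but because $u$ and $v$ lie in the same size class $\jobs_k$ the drop is strictly less than $\epsilon\cdot p_{\max}$, and the swap is performed only when the load of $v$'s machine is at least $\ell_{i'}$; hence after the swap the load still exceeds $\ell_{i'}-\epsilon\cdot p_{\max}$, so $v$'s machine does not enter $\machines_1$ either. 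Removing source edges can only increase distances, which is harmless.

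For (iii), I will check separately which weight-$1$ edges could be newly added. Because $u$'s job grows, no outgoing weight-$1$ edge from $u$ can appear, and because $v$'s job shrinks, no incoming weight-$1$ edge into $v$ can appear. The two remaining cases are a new incoming edge $(w,u)$ and a new outgoing edge $(v,w)$. For a new $(w,u)$, the job at $w$ must be smaller than $v$'s old job; this forces either $w$ to be a slot on the machine of $v$ (so that $d(w)=d(v)$ through the weight-$0$ clique) or $(w,v)$ to have been an edge before the swap (so $d(v)\le d(w)+1$). Since the BFS discovers $v$ at level $d(u)+1$ exactly when the swap is triggered, both sub-cases give $d(w)+1\ge d(u)$, so adding $(w,u)$ cannot shorten the $s$-to-$u$ distance. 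A symmetric argument handles the new outgoing edge $(v,w)$: either $w$ lies on the machine of $u$, giving $d(w)=d(u)$, or the edge $(u,w)$ was present before the swap, giving $d(w)\le d(u)+1$; in both sub-cases the new path through $(v,w)$ has length $d(v)+1=d(u)+2$, which is at least $d(w)$.

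The step I expect to require the most care is the verification that no machine newly enters $\machines_1$ during a swap; this hinges on the fact that the two swapped jobs belong to the same size class and therefore differ by less than $\epsilon\cdot p_{\max}$, combined with the threshold $\ell_{i'}$ (not $\ell_{i'}-\epsilon\cdot p_{\max}$) under which a swap is executed. Once this point and the orientation flip of the weight-$1$ edges are handled, the case analysis is the exact mirror of the first-stage argument, and the conclusion $d'(w)\ge d(w)$ for every slot $w$ follows.
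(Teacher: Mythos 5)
Your proof is correct and follows essentially the same route as the paper's: you classify the edge changes, observe that only incoming weight-$1$ edges at $u$ and outgoing weight-$1$ edges at $v$ can be created, and verify that every added edge $(a,b)$ satisfies $d(b)\le d(a)+1$ using $d(v)=d(u)+1$ and the weight-$0$ cliques, so no distance can drop. Your extra justification that no new source edge appears (the load of $v$'s machine drops by less than $\epsilon\cdot p_{\max}$ since the swapped jobs share a size class, and it started at $\ell_{i'}$ or above) is exactly the fact the paper uses implicitly.
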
 
\begin{proof}
Let $d(w)$ denote the distance from $s$ to slot $w$ before a swap. Clearly, removing edges does not decrease the distance from $s$ to any vertex. Edges between slots of the same machine do not change and we do not add new edges from the source to a machine since we only decrease the load on machines with load at least $\ell_i$ and, hence, do not decrease any machine load below $\ell_i-\epsilon\cdot p_{\max}$. So we only have to consider the changes in edges of weight $1$. Such an edge $(u,v)$ might be added to the graph but only when $d(v) \leq d(u) + 1$. Adding this edge will not decrease any distances, since the first part of a shortest path using $(u,v)$ can always be replaced by a path to $v$ not using $(u,v)$. We have to check that these are the only changes made to the graph.

Let $u$ and $v$ be the slots in which we exchange the jobs. This implies that the size of the job assigned to slot $u$ increases and the size of the job assigned to slot $v$ decreases. Next we look at the incoming edges of $u$ and outgoing edges of $v$ (all others remain the same). As the size of the job assigned to slot $u$ increases there could be a new incoming edge from slots with smaller jobs. Let $(w,u)$ be such an edge. Then either slot $w$ is on the same machine as $v$ or $(w,v)$ was an edge with weight $1$ before the swap. The former case implies that $d(w) = d(v) = d(u) + 1$ and in the latter case $d(u) = d(v) - 1 \leq d(w)$. In both cases we have that $d(u) \leq d(w) + 1$ which satisfies the property above. Next consider the new outgoing edges of $v$. Let $(v,w)$ be such an edge. Then the size of the job assigned to $w$ is larger than the job assigned to $v$ which was originally assigned to $u$. So, either $u$ and $w$ are on the same machine and $d(w) = d(u) = d(v)-1$ or $(u,w)$ was an edge of weight $1$ in the original graph and $d(w) \leq d(u) + 1 = d(v)$.
\end{proof} 

\begin{lem}\label{lem:ls_term_sec}
The second stage of the algorithm terminates after at most $O(n^{3})$ swaps. 
\end{lem}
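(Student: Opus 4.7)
The plan is to mirror the potential-function argument used for the first stage in the proof of Lemma~\ref{lem:ls_term}, but with the roles of ``large'' and ``small'' reversed to match the flipped edge orientation of the second-stage BFS graph. Taking Lemma~\ref{lem:ls_swapdis_sec} as a black box, I would define an integral potential on the current assignment, bounded by $O(n^3)$, that strictly increases with every swap; the $O(n^3)$ bound on the number of swaps then falls out.

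Concretely, I would order the jobs $j_1,\dotsc,j_n$ in \emph{decreasing} order of size (the opposite of the first-stage proof) and define
\[
\Phi \;=\; \sum_{i=1}^{n} i \cdot d(j_i),
\]
where $d(j_i)$ denotes the distance from $s$ to the slot currently holding $j_i$ in the second-stage graph. Since $\Phi$ is integral and crudely bounded by $n\cdot\sum_{i=1}^{n} i = O(n^3)$, it suffices to argue that each swap strictly increases $\Phi$.

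For a swap across the last edge $(u,v)$ on the BFS path, slot $u$ holds a smaller job than slot $v$ and $d(v)=d(u)+1$ by the graph construction. Writing $j_k$ for the job at $v$ and $j_h$ for the job at $u$, the decreasing-size ordering forces $k<h$. After the swap, $j_k$ moves to $u$ and $j_h$ to $v$, and Lemma~\ref{lem:ls_swapdis_sec} gives $d'(j_k)\ge d(u)=d(j_h)$, $d'(j_h)\ge d(v)=d(j_k)$, and $d'(j_i)\ge d(j_i)$ for every other $i$. The same short rearrangement as in the first-stage proof then shows that the change in $\Phi$ is at least $(h-k)\bigl(d(j_k)-d(j_h)\bigr)\ge 1$.

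I do not expect a genuine obstacle: the argument is essentially symmetric to that of Lemma~\ref{lem:ls_term}. The only subtlety is picking decreasing (rather than increasing) job order, which precisely aligns the monotonicity of the coefficients in $\Phi$ with the orientation of the weight-$1$ edges in the second-stage graph. Combined with Lemma~\ref{lem:ls_swaps}, which bounds each swap by $O(n^2)$ work, this yields a polynomial running time for the second stage as well.
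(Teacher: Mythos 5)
Your proposal is correct and follows essentially the same route as the paper: jobs sorted in decreasing order of size, the potential $\sum_{i} i\cdot d(j_i)$ bounded by $O(n^3)$, monotonicity of distances from Lemma~\ref{lem:ls_swapdis_sec}, and the same rearrangement showing a strict increase per swap. Your explicit justification of strictness via $d(v)=d(u)+1$ is just a slightly more detailed version of the paper's assertion that $d(j_k)>d(j_h)$.
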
 

\begin{proof}
 Let $j_1,\dots,j_n$ be the jobs in $\jobs$ in decreasing order of size. We claim that the potential
$$\sum_{i=1}^n i\cdot d(j_i)$$
increases with every swap. Let $j_k$ and $j_h$ be the jobs that were swapped and assume that $k<h$, i.e. $p_{j_k} > p_{j_h}$. This implies that the edge from the slot of $j_h$ to $j_k$ was deleted and new edges were constructed as described above. Let $d$ and $d'$ be the distances before and after the swaps, respectively. Based on Lemma \ref{lem:ls_swapdis} we have that $d'(j_i)\geq d(j_i)$ for all $i \notin \{h,k\}$, $d'(j_h) \geq d(j_k)$ and $d'(j_k) \geq d(j_h)$. Furthermore, we know $d(j_k) > d(j_h)$. From this it follows that 
\begin{equation*}
    \begin{split}
        kd'(j_k) + h d'(j_h) & \geq kd(j_h) + h d(j_k) \\
        & = kd(j_h) + (h-k)\cdot \underbrace{d(j_k)}_{> d(j_h)}  +  kd(j_k) \geq h d(j_h) + kd(j_k) 
    \end{split}
\end{equation*}
This concludes the proof.
\end{proof}
Hence, similar to the first stage, the second stage of the algorithm finishes in $n^{O(1)}$. 
\end{document}